\documentclass[11pt,letterpaper]{article}

\usepackage[T1]{fontenc}
\usepackage[utf8]{inputenc}
\usepackage{times}
\usepackage[margin=1in]{geometry}
\usepackage{amsmath,amssymb,amsthm}
\usepackage{mathtools}
\usepackage{graphicx}
\usepackage{booktabs}
\usepackage{caption}
\usepackage[numbers,sort&compress]{natbib}
\usepackage[hidelinks]{hyperref}
\usepackage{tikz}
\usetikzlibrary{arrows.meta,positioning,calc}
\usepackage{algorithm}
\usepackage{algorithmic}
\newtheorem{theorem}{Theorem}
\newtheorem{proposition}{Proposition}

\newtheorem{definition}{Definition}

\newcommand{\coalition}[1]{\langle\!\langle #1 \rangle\!\rangle}
\newcommand{\intruder}{\mathcal{I}}
\newcommand{\util}{\mathsf{u}}
\newcommand{\cost}{\mathsf{c}}
\newcommand{\reward}{\mathsf{r}}
\newcommand{\deduce}{\vdash}
\newcommand{\know}{\mathcal{K}}
\newcommand{\Msg}{\mathsf{Msg}}
\newcommand{\Act}{\mathsf{Act}}
\newcommand{\watl}{\textsf{wATL}}

\title{Rational Dolev--Yao Attackers:\\[2pt]
Decidable Incentive-Aware Verification of Security Protocols in Strategic Logic}

\author{%
\begin{tabular}{c@{\hspace{2.5em}}c@{\hspace{2.5em}}c}
  Ioana Boureanu & R.\ Ramanujam  \\[0.3em]
  {\small Surrey Centre for} & {\small IMSc Chennai} \\
  {\small Cyber Security} & {\small Azim Premji University} & \\
  {\small University of Surrey} & & \\
\end{tabular}%
}
\date{}

\begin{document}

\maketitle

\begin{abstract}
Symbolic protocol verification models the network attacker as a Dolev--Yao (DY)
intruder, which does everything its knowledge permits, whether or not it serves any
purpose; real adversaries instead \emph{maximise utility}, attacking only when the payoff
is positive. We introduce a \emph{rational Dolev--Yao attacker}, a DY intruder
whose actions carry costs and whose security-violating goals carry rewards, and call a
protocol \emph{rationally secure} when no intruder strategy achieves a violation with
strictly positive utility, expressed in a weighted fragment of ATL (\watl{}). We prove
this decidable for a bounded rational DY intruder over a finite cost-annotated concurrent
game structure, characterise its complexity, and show it strictly refines DY security:
some protocols are DY-insecure yet rationally secure, separated by a computable threshold.
We illustrate the framework on two contrasting use-cases: an authenticated payment under
session uncertainty, where a rational intruder must strategise across indistinguishable
sessions and its imperfect information strictly raises the attack cost a designer must
price against; and ThreeBallot, a cryptography-free scheme where we pinpoint the
bribe-to-benefit ratio below which no rational coercer attacks.
\end{abstract}

\section{Introduction}

For four decades the formal analysis of security protocols has rested on the Dolev--Yao
(DY) model~\citep{dolev1983security}: an attacker controls the network, intercepting,
decomposing, recombining and injecting messages, limited only by `perfect cryptography'.
The DY attacker is defined by \emph{capability}, performing any action its knowledge
permits, and security means that \emph{no} sequence of such actions reaches an attack
state.

This assumes an adversary \emph{indifferent to cost}: a DY intruder will corrupt a thousand
parties to steal one cent, because the model asks whether an attack is \emph{possible},
never whether it is \emph{worthwhile}. The honest-or-malicious dichotomy thus cannot express
the middle ground where real participants live, that of the \emph{self-interested} agent who
follows a protocol when it pays and deviates when it does not. Nowhere is this starker than
in vote-buying, where a coercer pays only if it can \emph{verify} compliance, and where a
qualitative model cannot even state the property that matters: that no attack is worth
mounting.

\paragraph{Contribution.} We lift the DY attacker from a capability to an \emph{incentive}
model inside a strategic logic, and show the result is decidable. We define a \emph{rational
Dolev--Yao attacker} whose actions carry costs and whose goals carry rewards, over a
cost-annotated concurrent game structure (CGS) with an explicit message-deduction closure,
and \emph{rational security} in a weighted fragment of ATL (\watl{}), which makes the ``no
positive-utility attack'' criterion of Rational Protocol Design~\citep{garay2013rational}
syntactic. We prove decidability for a bounded rational DY intruder by
reduction to \watl{} model-checking over a finite weighted CGS, place its complexity, and show
DY-insecurity and rational security to be consistent, separated by a computable threshold.
Finally, we give two contrasting use-cases (Sections~\ref{sec:relay} and~\ref{sec:threeballot}):
an authenticated payment under session uncertainty, in which a rational intruder strategises
across indistinguishable sessions, so that the deduction closure, the cumulative structure of
the attack, and imperfect information are all load-bearing, and imperfect information strictly
raises the attack cost a designer must price against; and ThreeBallot, a cryptography-free
scheme with an established qualitative $\mathsf{ATL}$ model, in which we recover the
bribe-to-benefit ratio below which no rational coercer attacks.
ATL~\citep{alur2002alternating} is the natural home for this, quantifying over what a
coalition \emph{can strategically achieve}: our move is to \emph{constrain that strategy space
by utility}, asking not which strategies the intruder \emph{can} play but which it \emph{would}.

\section{Related Work}

\paragraph{Rational cryptography and rational attackers.} Utility-maximising participants
originate with rational secret sharing~\citep{halpern2004rational}. \emph{Rational Protocol
Design} (RPD)~\citep{garay2013rational} declares a protocol secure iff no attack yields positive
utility; we adopt this criterion, but where RPD is simulation-based and cryptographic, we give a
\emph{logic-based, model-checkable} instantiation with an explicit DY deduction closure.
Symbolically, Bella and Bistarelli~\citep{bella2001rational} propose ``rational'' and ``general''
attackers performing cost/benefit analysis, but abstract the utility away; we keep utilities
\emph{in} the model and reason about incentive-driven strategies inside a decidable logic.

\paragraph{Rational verification and strategic logics for security.} Rational
verification~\citep{gutierrez2017rational} asks which temporal properties hold in the equilibria
of self-interested agents, with tool support (EVE), but has never been pointed at a DY intruder
with a deduction closure; doing so, decidably, is our contribution. Game-based verification has
treated fair exchange and contract signing~\citep{kremer2003game}, and strategic/epistemic logics
protocol analysis, yet these are \emph{qualitative}: the attacker is a capability, not an
incentive. Our substrate builds on decidability for strategic abilities under imperfect
information~\citep{belardinelli2025infosharing}.

\paragraph{Verification of voting.} Coercion-resistance and receipt-freeness have precise symbolic
formulations~\citep{juels2005coercion,delaune2010verifying,clarkson2008civitas}, but remain
qualitative: coercion is or is not possible. We give a quantitative companion, in which coercion is
or is not \emph{profitable}. Closest to our example, \citet{belardinelli2021bisimulations} model
ThreeBallot as an imperfect-information CGS and cast coercion-resistance in $\mathsf{ATL}$, verified
in MCMAS, which is the qualitative shape of our $\coalition{\intruder}\lozenge\,\mathit{viol}$.
Their attacker is a capability, so coercion-resistance holds iff no coercing strategy \emph{exists},
whereas we ask whether one is \emph{profitable} and return the threshold
$\epsilon > (c_a+c_v)/(R-B)$ rather than a Boolean. Their tractability, moreover, comes from
\emph{bisimulation-based model reduction} rather than from the logic, under the same
imperfect-recall regime our result occupies, so the techniques compose.

\paragraph{Imperfect information, recall, and decidability.} Model-checking $\mathsf{ATL}$ under
imperfect information is undecidable with \emph{perfect recall}~\citep{dima2011undecidable} and
$\Delta_2^{\mathrm P}$-complete when memoryless~\citep{jamroga2006complete}. Our bounded,
memoryless formulation therefore sits on the decidable side, and lifting to perfect recall is a
genuine frontier. The route we favour is to restrict \emph{information flow} rather than recall:
the A-cast subclass of~\citet{belardinelli2022acast,belardinelli2025infosharing} keeps a
significant $\mathsf{ATL}$ fragment decidable by constraining how agents outside a coalition share
data with those inside, and was built for security problems, such as terrorist-fraud, that resist
faithful strategic-logic expression. Encoding our intruder's deduction closure as A-cast
information-sharing is the natural path beyond the bounded case.

\paragraph{Resource- and capacity-bounded strategic reasoning.} A complementary line constrains
not the \emph{information} available to agents but the \emph{resources} they may spend.
\citet{ballot2024capacity} introduce capacity-constrained agents, whose ability to act is limited
by a budget they consume, and \citet{ballot2025thesis} develops this further, including stochastic
abilities. Their capacity bound plays the same structural role as our budget dimension, keeping the
augmented state space finite; the distinction is one of purpose. A capacity is an \emph{exogenous
limit} imposed by the model, whereas our cost is an \emph{endogenous price} weighed against a
reward: our intruder is not forbidden from mounting an expensive attack but declines to, it yielding
negative utility. Their question is what an agent \emph{can} achieve within its means, ours what a
self-interested agent \emph{will} achieve given its incentives; it is the reward side of the ledger,
absent there, that yields our threshold. The two remain technically close, and their machinery is a
natural implementation substrate for our weighted fragment.

\section{The Rational Dolev--Yao Attacker}

We model a protocol together with its intruder as a concurrent game structure (CGS)~\citep{alur2002alternating}
whose states record the intruder's knowledge and whose intruder-transitions are
annotated with costs.

\subsection{Messages and Deduction}

As per the usual in Dolev-Yao verification work \citep{dolev1983security}, let $\Msg$ be a term algebra over a set of atomic messages (nonces, keys,
identities, votes) closed under pairing $\langle\cdot,\cdot\rangle$ and
encryption $\{\cdot\}_{(\cdot)}$. We assume \emph{perfect cryptography}: the only
way to obtain the plaintext of $\{m\}_k$ is to possess $k$. The intruder's
deductive power is the standard DY closure. Given a finite knowledge set
$\know \subseteq \Msg$, the derivable messages $\overline{\know}$ are the least
set containing $\know$ and closed under the rules:
\[
\frac{m_1,\, m_2}{\langle m_1,m_2\rangle}\quad
\frac{\langle m_1,m_2\rangle}{m_i}\quad
\frac{m,\, k}{\{m\}_k}\quad
\frac{\{m\}_k,\, k^{-1}}{m}.
\]
To keep $\overline{\know}$ finite we impose a bound $\delta$ on message depth: the
intruder only forms terms of depth $\le \delta$; this is the standard
\emph{bounded} DY setting in which symbolic reachability is already decidable~\cite{RamSu03}.

\subsection{Cost-Annotated Concurrent Game Structures}

\begin{definition}[Rational DY CGS]
A \emph{rational DY concurrent game structure} is a tuple
\[
\mathcal{C} = \langle \mathit{Ag}, S, s_0, \{\Act_a\}_{a\in \mathit{Ag}},
\mathit{tr}, \{\know_a\}, \cost, \reward, \Pi, \pi\rangle
\]
where $\mathit{Ag} = \{\intruder\} \cup H$ comprises the intruder and honest
agents $H$; $S$ is a finite set of states with initial state $s_0$; each state
$s$ records each agent's \emph{observable} knowledge $\know_a(s)\subseteq\Msg$.
The intruder's enabled actions depend only on this observable knowledge: at $s$,
$\Act_\intruder(s)$ contains $\mathsf{intercept}(m)$ and $\mathsf{block}(m)$ for
each message $m$ currently on the wire (whether or not the intruder can read its
contents), $\mathsf{inject}(m)$ for each $m\in\overline{\know_\intruder(s)}$, and
$\mathsf{compute}$; honest agents' actions are their protocol steps
($\mathsf{send}(m)$, $\mathsf{receive}(m)$). Crucially, an action's \emph{effect}
is resolved by the \emph{true} state through $\mathit{tr}$, not by what the
intruder observes: $\mathit{tr}$ maps a state and a joint action to a successor,
so the same intruder action (e.g.\ forwarding an opaque ciphertext) may have
different consequences at two states the intruder cannot tell apart.
Finally, $\cost:\Act_\intruder\to\mathbb{N}$ assigns a cost to each intruder
action; $\reward:S\to\mathbb{N}$ assigns a reward to selected \emph{goal} states;
$\Pi$ is a set of atomic propositions and $\pi$ their valuation.
\end{definition}

The separation of \emph{observation} (which governs enabled actions and the
indistinguishability relation below) from \emph{effect} (which is resolved by the
true state through $\mathit{tr}$) is what lets the model treat opaque messages
faithfully. An intruder holding a ciphertext $\{n\}_k$ without $k$ can
\emph{forward} it, a single enabled action, but whether that forward causes an
honest party to accept depends on the true $n$ and $k$ hidden inside, so the
same action leads to acceptance in one world and to nothing in an
indistinguishable other. This is exactly the situation in which imperfect
information does real work, and it is the source of the strategic hedging in the
use-case of Section~\ref{sec:relay}.

The knowledge
components of states make the structure one of \emph{imperfect information}: the intruder
chooses actions based on what it has observed; e.g.,  in a  
voting protocol, the intruder cannot distinguish states that differ only in a voter's private choice
when the protocol is designed to hide it.

\subsection{Runs, Histories, and Indistinguishability}

To interpret strategies precisely under imperfect information, we make the notions
of run and history explicit, following the standard state-based account, as per, e.g., \cite{belardinelli2025infosharing}.

\begin{definition}[Runs and histories]
Given a rational DY CGS $\mathcal{C}$, a \emph{run} is a finite or infinite
sequence $\lambda = s_0\,\vec{a}_0\,s_1\,\vec{a}_1\cdots$ in
$((S\cdot\Act^{\mathit{Ag}})^*\cdot S)\cup(S\cdot\Act^{\mathit{Ag}})^\omega$ such
that for every $j\ge 0$, $s_j \xrightarrow{\vec{a}_j} s_{j+1}$, i.e.\
$\mathit{tr}(s_j,\vec{a}_j)=s_{j+1}$. For a run $\lambda$ and index $j$, we write
$\lambda[j]$ for the $(j{+}1)$-th state $s_j$, $\lambda[{\le}j]$ for the prefix
ending at $s_j$, and $\lambda_{\ge j}$ for the suffix starting at $s_j$; and
$\mathit{last}(\lambda)$ for the final state of a finite $\lambda$. A finite,
non-empty prefix $h\in S^+$ (with the intervening actions) is a \emph{history};
$\mathit{Hist}(\mathcal{C})$ denotes the set of all histories.
\end{definition}

The intruder observes only its own knowledge component, which induces an
indistinguishability relation lifted from states to histories.

\begin{definition}[Indistinguishability]
For states $s,s'\in S$, $s\sim_\intruder s'$ iff
$\know_\intruder(s)=\know_\intruder(s')$; this is an equivalence relation. It is
lifted to histories \emph{synchronously}: for $h,h'\in S^+$,
$h\sim_\intruder h'$ iff $|h|=|h'|$ and $h[i]\sim_\intruder h'[i]$ for every
$i\le|h|$. Thus indistinguishable histories have equal length, are position-wise
indistinguishable, and carry perfect recall.
\end{definition}

Because enabled actions depend only on observable knowledge,
$s\sim_\intruder s'$ implies $\Act_\intruder(s)=\Act_\intruder(s')$: the same
moves are available. This does \emph{not} mean the moves have the same
consequences. Two $\sim_\intruder$-related states may differ in honest hidden
state, e.g.\ a nonce sealed under a key the intruder lacks, so that an identical
forward or inject drives $\mathit{tr}$ to a goal state in one and to a harmless
state in the other. Observation constrains \emph{choice}; the true state
determines \emph{effect}.

\subsection{Strategies and Utility}

\begin{definition}[Uniform strategy]
A (uniform, memoryful) strategy for the intruder is a function
$\sigma_\intruder:\mathit{Hist}(\mathcal{C})\to\Act_\intruder$ such that for all
$h,h'\in\mathit{Hist}(\mathcal{C})$: (i) $\sigma_\intruder(h)$ is enabled at
$\mathit{last}(h)$, i.e.\ $\sigma_\intruder(h)\in\Act_\intruder(\mathit{last}(h))$;
and (ii) $h\sim_\intruder h'$ implies $\sigma_\intruder(h)=\sigma_\intruder(h')$.
Condition (ii) is the \emph{uniformity} constraint: the intruder must act
identically on histories it cannot tell apart. A strategy is
\emph{memoryless} if $\sigma_\intruder(h)$ depends only on $\mathit{last}(h)$.
\end{definition}

Fixing the honest agents' protocol strategies $\sigma_H$ (the protocol itself), a
uniform intruder strategy $\sigma_\intruder$ determines a set of runs; when
$\sigma_H$ is deterministic and the honest environment resolves remaining
choices, each initial state  $s_0$ yields the runs consistent with the profile
$(\sigma_\intruder,\sigma_H)$. 

We write \emph{$\mathit{out}(s_0,\sigma_\intruder)$} for
this outcome set, i.e., the \emph{outcome of an intruder strategy from an initial state}.

\begin{definition}[Utility]
The intruder's \emph{utility} along a run $\lambda$ is
\[
\util(\lambda) \;=\; \reward(\lambda) \;-\;
   \sum_{i\ge 0}\cost\big(\sigma_\intruder(\lambda[{\le}i])\big),
\]
where $\reward(\lambda)$ is the reward $\reward(s)$ of the first goal state $s$ on
$\lambda$ (and $0$ if no goal state occurs).
\end{definition}

A uniform strategy spends the \emph{same} amount on every run of a
$\sim_\intruder$-class, since it plays identical actions on indistinguishable
histories: the cost term is constant across the class. The \emph{reward} term is
not. Because effects are resolved by the true state, the same uniform play may
reach a goal state on one run of the class and not on an indistinguishable other,
so $\reward(\lambda)$ varies while cost is fixed, and hence utility varies across
the class. This is precisely why the worst case over a class is a meaningful
quantity rather than a constant, and why it is not a basis for the intruder to
tell the runs apart: the differing rewards are realised only \emph{after} the
uniform choice is committed, too late to inform it.

\begin{definition}[Rational Intruder]
A \emph{rational} intruder plays a \emph{best response}: a uniform strategy
maximising its \emph{guaranteed} utility, i.e.\ the worst case of $\util(\lambda)$
over the runs $\lambda\in\mathit{out}(s_0,\sigma_\intruder)$ that share a
$\sim_\intruder$-class, against the honest protocol. A protocol is attacked by a
rational intruder only when some uniform strategy guarantees strictly positive
utility.
\end{definition}

\section{Game-Theoretic Dolev--Yao Deduction}
\label{sec:gtded}

The rational intruder of Section~3 carries costs on its \emph{actions}, but
classical Dolev--Yao (DY) theory is about \emph{derivations}: what an intruder
can deduce from what it knows. To connect the two, and to obtain a genuinely
game-theoretic deduction theory rather than a cost model bolted onto a capability
model, we lift the DY inference system itself to a weighted, adversarial
setting. The central object becomes not the \emph{derivability} of a message but
the \emph{cheapest derivation} of it, and, in the presence of an opponent who
controls some of the premises, the \emph{value} of a derivation game.

\subsection{Weighted Deduction Systems}

Recall the DY deduction relation $\know \deduce m$: message $m$ is derivable from
knowledge set $\know$ using \emph{composition} rules (pairing $\langle t_1,
t_2\rangle$, encryption $\{t\}_k$) and \emph{decomposition} rules (projection,
decryption-with-key); see, e.g.,~\cite{Suresh03}. Classically every rule may be
applied freely. We instead \emph{price} each rule instance, and we distinguish
two sources of cost that a rational attacker must pay.

\begin{definition}[Weighted DY system]\label{def:wds}
A \emph{weighted DY system} is the DY inference system together with a cost
function $w$ that assigns a value in $\mathbb{N}\cup\{\infty\}$ to each rule
instance, split into two components.
\begin{itemize}
\item \emph{Deduction costs} $w_{\mathrm{ded}}$ price the \emph{computational}
rules. Each composition or decomposition instance carries the cost of performing
that cryptographic operation: $w_{\mathrm{enc}}$ to form $\{t\}_k$,
$w_{\mathrm{dec}}$ to open $\{t\}_k$ given $k^{-1}$, $w_{\mathrm{pair}}$ and
$w_{\mathrm{proj}}$ for pairing and projection. The knowledge axiom
$m\in\know\Rightarrow\know\deduce m$ costs $0$: messages already held are free.
\item \emph{Acquisition costs} $w_{\mathrm{acq}}$ price the \emph{communication}
and \emph{corruption} rules that bring \emph{new} atoms into $\know$. Obtaining a
message off the wire costs $w_{\mathrm{intercept}}$; suppressing one costs
$w_{\mathrm{block}}$; injecting a derived message costs $w_{\mathrm{inject}}$;
learning a long-term secret by corrupting its holder costs $w_{\mathrm{corrupt}}$.

These are the atoms of the acquisition rule
$\dfrac{\quad}{\know\deduce m}\;[m \text{ acquired}]$, whose side condition names
the acquisition performed and whose weight is the corresponding
$w_{\mathrm{acq}}$.
\end{itemize}
Both components take values in $\mathbb{N}\cup\{\infty\}$, with $\infty$ marking an
operation the attacker cannot perform at all (e.g.\ decrypting without any route
to the key), so that infeasibility is the $\infty$-cost limit of the same scale.
\end{definition}

\begin{definition}[Derivation cost, compositionally]\label{def:dercost}
A \emph{derivation} $\Pi$ of $m$ from $\know$ is a finite proof tree with
conclusion $\know\deduce m$. Its cost is defined by structural recursion over the
tree, so that costs \emph{compose additively along proof structure}:
\[
w(\Pi) \;=\;
\begin{cases}
0 & \Pi \text{ is the axiom } m\in\know,\\[2pt]
w_{\mathrm{acq}}(m) & \Pi \text{ acquires } m,\\[2pt]
w_{\mathrm{ded}}(r) + \displaystyle\sum_{i=1}^{k} w(\Pi_i)
 & \Pi \text{ ends in rule } r \text{ with}\\[-2pt]
 & \text{sub-derivations } \Pi_1,\dots,\Pi_k.
\end{cases}
\]
That is, the price of a compound attack is the price of its final operation plus
the prices of the sub-derivations that produce its premises; a shared sub-message
reused in two places is paid for once, since once derived it lies in $\know'$ at
cost $0$ thereafter.
\end{definition}

The additive, structural definition is what makes the theory a 
\emph{cost calculus}: it lets us reason about the price
of an attack by decomposing over the derivation, and it is the formal basis of
the compositional lower-bound principle below. Because a message can be obtained
in several ways (forged from parts, intercepted whole, or decrypted after a
corruption), the quantity of interest is the cheapest.

\begin{definition}[Deduction cost]\label{def:deltaw}
The \emph{deduction cost} of $m$ from $\know$ is
\[
\Delta_w(\know, m) \;=\; \min\{\, w(\Pi) : \Pi \text{ derives } \know\deduce m \,\},
\]
with $\Delta_w(\know,m)=\infty$ if $m$ is not derivable. It satisfies the
triangle-style inequalities
\[
\Delta_w(\know,\langle m_1,m_2\rangle) \le w_{\mathrm{pair}} +
\Delta_w(\know,m_1) + \Delta_w(\know,m_2),
\]
\[
\Delta_w(\know,\{m\}_k) \le w_{\mathrm{enc}} + \Delta_w(\know,m) +
\Delta_w(\know,k),
\]
and dually for the decomposition rules, with equality when the displayed
derivation is optimal.
\end{definition}

Classical DY derivability is the qualitative shadow of this quantity:
$\know\deduce m$ iff $\Delta_w(\know,m) < \infty$. The weighted theory refines
every derivability statement into a magnitude, and rationality operates on these
magnitudes: an attacker weighs $\Delta_w(\know,m^\star)$, the cheapest way to
obtain a goal message, against the reward the goal yields.

\begin{proposition}[Cheapest derivations are computable]\label{prop:cheap}
For a finite knowledge set $\know$ and bounded message depth $\delta$,
$\Delta_w(\know,m)$ is computable, for all $m\in\Msg_\delta$ simultaneously, in
time polynomial in the size of the bounded message space.
\end{proposition}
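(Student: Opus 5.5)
We reduce the computation of $\Delta_w(\know,\cdot)$ to a shortest-path problem on a finite weighted AND/OR hypergraph (a Knuth-style generalisation of Dijkstra's algorithm), and argue optimality by induction on derivation trees. The first step is to fix the finite universe $\Msg_\delta$: the set of terms of depth $\le\delta$ over the atoms occurring in $\know$, together with those obtainable by acquisition. Its size $N$ is the parameter in which polynomiality is stated. Over $\Msg_\delta$ we build a hypergraph. Each message is a node. Each instance of a composition or decomposition rule whose premises and conclusion all lie in $\Msg_\delta$ is a hyperedge from its (one or two) premise nodes to its conclusion, carrying weight $w_{\mathrm{ded}}(r)$. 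The knowledge axiom and the acquisition rule are nullary hyperedges of weight $0$ and $w_{\mathrm{acq}}(m)$ respectively. Each rule has at most two premises, and each message has a bounded number of immediate subterms and superterm shapes relative to other nodes. Hence the number of hyperedges is $O(N^2)$: pairing and encryption range over pairs of nodes, and projection and decryption are determined by their conclusion.

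The second step is to compute the least fixed point of the Bellman equations
\[
d(m)=\min\Bigl(d_0(m),\ \min_{r:\,m_1,\dots,m_k\to m}\bigl(w_{\mathrm{ded}}(r)+\textstyle\sum_i d(m_i)\bigr)\Bigr),
\]
where $d_0(m)$ is $0$ for $m\in\know$, $w_{\mathrm{acq}}(m)$ if $m$ is acquirable, and $\infty$ otherwise. We compute it with Knuth's superior-function Dijkstra. Keep a priority queue of tentative values and repeatedly settle the node of minimal tentative value. When a node is settled, relax every hyperedge all of whose premises are now settled. The combining function $w_{\mathrm{ded}}(r)+\sum_i x_i$ is monotone and satisfies $f(\vec x)\ge\max_i x_i$ because weights are nonnegative naturals. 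This is exactly the superiority condition, and under it settled values are final. Since $\infty$ is treated as an absorbing top element, undefined operations simply never fire. The running time is $O((N+E)\log N)$ with $E=O(N^2)$, which is polynomial in $N$.

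The third step, correctness, is where I expect the main obstacle. The direction $d(m)\le\Delta_w(\know,m)$ follows by structural induction on a derivation $\Pi$ using Definition~\ref{def:dercost}, since each clause of $w(\Pi)$ matches a term of the Bellman equation. The direction $d(m)\ge\Delta_w$ follows by extracting from the Dijkstra predecessor hyperedges a proof tree of cost exactly $d(m)$. This tree is well-founded because premises are always settled before their conclusions.

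The delicate point is that proof trees are \emph{trees}. The additive recursion therefore duplicates shared sub-derivations, while the text after Definition~\ref{def:dercost} says a shared sub-message is "paid for once". If sharing is intended, cost becomes DAG cost, and minimising DAG cost is in general hard, since it is Steiner-tree-like. I would resolve this by taking Definition~\ref{def:dercost} literally as tree cost, which is what $\Delta_w$ formally minimises and which the hypergraph shortest path computes exactly. I would remark that the sharing sentence concerns reuse across successive protocol steps, where $\know$ is updated to $\know'$, rather than within a single derivation. A second concern is restricting to depth $\le\delta$. Here I would note that decomposition rules only shrink depth, and that composition beyond $\delta$ is excluded by the bounded setting. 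Optimal derivations of depth-$\le\delta$ goals can therefore be assumed to stay within $\Msg_\delta$, so restricting the hypergraph loses nothing.
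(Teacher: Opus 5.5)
Your proposal is correct and follows the paper's own route. The paper likewise treats $\Delta_w(\know,\cdot)$ as a shortest-hyperpath problem on the DY derivation hypergraph over $\Msg_\delta$: rule instances are hyperedges weighted by $w_{\mathrm{ded}}$, acquisitions are source hyperedges weighted by $w_{\mathrm{acq}}$, the values are computed by Dijkstra-style saturation using non-negativity of the weights, and the depth bound makes everything finite. Your extra material (Knuth's superiority condition, the two-direction correctness argument, and reading Definition~\ref{def:dercost} as tree cost, which is exactly what the paper's additive relaxation computes) fills in what the paper leaves as a proof idea. One small slip: projection and decryption instances are determined by their major premise, not by their conclusion, but this does not affect the polynomial bound.
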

\begin{proof}[Proof idea]
The recursion of Definition~\ref{def:dercost} is a shortest-hyperpath problem on
the DY derivation hypergraph~\citep{gallo1993hypergraphs}: nodes are messages of
$\Msg_\delta$, and each rule instance is a hyperedge from its premises to its
conclusion weighted by $w_{\mathrm{ded}}$ (or a source hyperedge weighted by
$w_{\mathrm{acq}}$ for an acquisition). Because all weights are non-negative,
$\Delta_w(\know,\cdot)$ is the least fixpoint of the relaxation
$\Delta_w(\know,\mathrm{rule}(m_1,\dots,m_k)) \gets \min\bigl(\cdot,\;
w_{\mathrm{ded}}(\mathrm{rule}) + \sum_i \Delta_w(\know,m_i)\bigr)$, initialised
to $0$ on $\know$ and to $w_{\mathrm{acq}}$ on acquirable atoms, and is computed
by a Dijkstra-style saturation over $\Msg_\delta$. Boundedness of $\delta$ makes
$\Msg_\delta$, and hence the hypergraph, finite.
\end{proof}

Thus the ``price of intruder knowledge'' is efficiently computable, and the
additive cost model composes exactly so that this single computation prices every
possible goal at once. The decidability of rational security will accordingly rest
in large part on this derivation layer, not only on the state-transition layer.

\subsection{Deduction Games}

The picture so far is single-agent: the intruder minimises the cost of deriving a
goal. But protocol security is adversarial, and the premises the intruder needs
are not free-floating: they are emitted, or withheld, by the \emph{honest agents}
running the protocol. We therefore cast deduction as a two-player game between the
intruder and the \emph{coalition of honest agents} $H$, rather than against an
abstract environment.
% An abstract
% ``blocker'' would be an omniscient opponent that knows the intruder is present and
% plays to deny it; a coalition of honest agents is different, and weaker as an
% adversary to the intruder, in a way that is central to protocol security: 
Each
honest party follows its protocol \emph{blind to the attack}, emitting whatever
its local state and the protocol dictate, whether or not an intruder is listening.

\begin{definition}[Honest-coalition deduction game]\label{def:dedgame}
The \emph{deduction game} $G(\know_\intruder^0, \varphi)$ is played between
\textsc{Attacker} (the intruder, deriving toward a goal $m\models\varphi$) and the
\emph{honest coalition} $H$. A position is a triple
$(\know', \vec{\ell}, \mathrm{avail})$ where $\know'\subseteq\Msg$ is the
intruder's current knowledge (initially $\know_\intruder^0$, growing monotonically);
$\vec{\ell}=(\ell_a)_{a\in H}$ records each honest agent's local protocol state;
and $\mathrm{avail}\subseteq\Msg$ is the set of messages the honest agents' current
states make emittable. The moves are:
\begin{itemize}
\item \textsc{Attacker} may apply a weighted rule instance whose premises lie in
$\know'$ (Definitions~\ref{def:wds}--\ref{def:dercost}), paying $w_{\mathrm{ded}}$
and adding the conclusion to $\know'$; or perform a weighted acquisition
(intercept an emitted message, or corrupt an agent to obtain its secrets), paying
$w_{\mathrm{acq}}$ and adding the acquired atoms to $\know'$.
\item Each honest agent $a\in H$ moves \emph{only according to its protocol
strategy} $\sigma_a$ applied to its local state $\ell_a$: it emits the protocol-mandated
message (updating $\ell_a$ and $\mathrm{avail}$) whenever its local state enables
a send. Crucially, $\sigma_a$ is a function of $\ell_a$ \emph{alone}: it does not
depend on whether an intruder is present, on $\know'$, or on the other agents'
hidden state.
\end{itemize}
\textsc{Attacker} wins when $\know'\deduce m$ for some $m\models\varphi$. The
\emph{value} $\mathrm{val}(G)$ is the least cost \textsc{Attacker} can guarantee
against the honest coalition's protocol play.
\end{definition}

Two consequences of playing against \emph{protocol-driven}, rather than
adversarial, honest agents distinguish this from an intruder-versus-blocker game.

\emph{Honest agents cannot strategise against the attack.} Because each $\sigma_a$
depends only on $\ell_a$, the honest coalition has no move that ``withholds to
spite the intruder'': a message is emitted iff the protocol says so on that local
state. The intruder therefore does not face a worst-case denier of premises; it
faces exactly the message flow the protocol produces. This makes
$\mathrm{val}(G)$ an \emph{upper} bound on the true price of an attack that a
blocker-based game would report, and the correct one, since a real attacker
confronts honest parties, not an omniscient opponent.

\emph{But honest agents can be \emph{driven}.} The intruder's injections change
what an honest agent \emph{receives}, hence its next local state $\ell_a$, hence
what it emits next. So although $H$ does not play \emph{against} the intruder, the
intruder can still steer the coalition into emitting a message it needs, by
feeding forged inputs, each at cost $w_{\mathrm{inject}}$. The game value thus
prices the full attack: the computations the intruder performs, the messages it
intercepts or injects, and the corruptions it pays for, to \emph{induce} the
honest coalition to complete the derivation.

\begin{theorem}[Determinacy and value]\label{thm:determinacy}
Every honest-coalition deduction game with non-negative integer weights and
bounded message depth $\delta$ is determined, and $\mathrm{val}(G)$ is computable.
Moreover $\mathrm{val}(G(\know_\intruder^0,\varphi)) < \infty$ iff the intruder
has a DY attack achieving $\varphi$ against the protocol, and equals the cost of
the cheapest such attack.
\end{theorem}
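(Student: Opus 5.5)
The plan is to reduce the honest-coalition deduction game to a finite weighted reachability game on a graph of positions, and then invoke standard determinacy and value-iteration results for such games.

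\emph{Finiteness of the arena.} The first step is to show the game has finitely many positions. Under the depth bound $\delta$, every knowledge set $\know'$ is a subset of the finite set $\Msg_\delta$. It grows monotonically, so along any play it changes at most $|\Msg_\delta|$ times. Each honest local state $\ell_a$ ranges over the finitely many protocol states of a bounded session, and $\mathrm{avail}$ is determined by $\vec{\ell}$. Hence the positions $(\know',\vec{\ell},\mathrm{avail})$ form a finite set $P$. I would turn this into a finite game graph as follows:
\begin{itemize}
\item \textsc{Attacker}'s moves are weighted edges, each adding a conclusion or an acquired atom to $\know'$, or an injection changing some $\ell_a$.
\item Honest moves are deterministic transitions prescribed by $\sigma_a(\ell_a)$.
\end{itemize}
Because each $\sigma_a$ depends on $\ell_a$ alone, the honest coalition has no real choice. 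Any residual nondeterminism (scheduling, which enabled agent sends) is resolved by the environment, and I would treat it as the opponent's choice.

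\emph{Determinacy and computability.} The winning set is $W=\{p : \know'\deduce m \text{ for some } m\models\varphi\}$. It is decidable at each position, since $\Delta_w(\know',m)<\infty$ is computable by Proposition~\ref{prop:cheap}. The game is then a min-cost reachability game on a finite graph with non-negative integer weights. This class is determined: the reachability objective is open, so Gale--Stewart/Martin determinacy applies. More concretely, the value is the least fixpoint of the operator
\[
V(p)=\begin{cases}0 & p\in W,\\ \min_{e}\bigl(w(e)+V(\mathrm{succ}(e))\bigr) & p \text{ an \textsc{Attacker} position},\\ \max_{p'} V(p') & p \text{ an environment position}.\end{cases}
\]
With non-negative weights, this fixpoint is reached by a Dijkstra-style iteration in at most $|P|$ rounds (Khachiyan et al.\ style). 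The min and max values coincide and are attained by positional strategies. This yields both determinacy and computability of $\mathrm{val}(G)$.

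\emph{Characterisation.} The last step is the correspondence between strategies and attacks. In one direction, a finite-value \textsc{Attacker} strategy, played against the protocol-driven honest moves, unfolds into a finite sequence of intercepts, injects, corruptions and deductions. That sequence is exactly a DY attack reaching $\varphi$, and its cost is the summed weights. In the other direction, any DY attack achieving $\varphi$ can be replayed as an \textsc{Attacker} strategy of the same cost. Here I use Definition~\ref{def:dercost}: reused sub-messages are free once they are in $\know'$, so the strategy's cost equals the attack's cost. Taking minima on both sides gives $\mathrm{val}(G)=$ the cheapest attack cost, and gives $\mathrm{val}(G)=\infty$ iff no attack exists.

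The main obstacle is the correspondence step when scheduling nondeterminism is present. If the environment's choices are adversarial, then ``cheapest attack'' must mean cheapest \emph{guaranteed} attack, and an attack succeeding only on some schedule would not be reflected in a finite value. I would first resolve this by fixing a deterministic protocol (the same assumption as $\sigma_H$ deterministic in Section~3), so the opponent has no choices. The game then collapses to a single-player shortest-path problem, where the equality of value and cheapest attack is immediate. The general case would be stated as the guaranteed-cost version.
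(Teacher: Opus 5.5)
Your argument is correct and has the same three-part skeleton as the paper's proof: a finite arena, standard results on quantitative reachability games, and a replay correspondence between finite-value plays and DY attacks. The computability step, however, goes a different way.

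The paper adds a spent-budget component $b\le B$ and computes the qualitative attractor on $(\know',\vec\ell,\mathrm{avail},b)$. It then takes $\mathrm{val}(G)$ to be the least $B$ whose attractor contains the initial position. This is pseudo-polynomial in the weights. To certify $\mathrm{val}(G)=\infty$ it implicitly needs an a priori bound on $B$, or a separate qualitative check. Its advantage is that it is exactly the threshold form $\mathrm{val}(G)<R$ that the rational-security results use, and it anticipates the arena $\mathcal{G}_R$ of the decidability proof.

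You instead run a min/max Bellman iteration on the unaugmented arena. Via the Dijkstra-style algorithm of Khachiyan et al., this returns the exact value, $\infty$ included, in time polynomial in the arena size. For the Gale--Stewart step to give equality of the quantitative upper and lower values, apply it to each threshold objective ``reach $W$ with spend $\le B$''. That objective is open because weights are non-negative.

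Your characterisation step is also more explicit than the paper's. The paper says honest moves add ``no branching beyond the protocol's own nondeterminism'' and then silently treats emissions as fixed. You point out that under adversarial scheduling the identity holds only for guaranteed attacks. You also note that for deterministic $\sigma_H$ the game degenerates to a one-player shortest-path problem, where the identity is immediate.

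One correction: the value is the \emph{greatest} fixpoint of your operator in the natural order on $[0,\infty]$, not the least. Equivalently, it is the limit of value iteration started from $0$ on $W$ and $\infty$ elsewhere. The arena can contain zero-weight cycles, for example a zero-cost rule re-deriving a message already in $\know'$. At such a cycle the least fixpoint can assign $0$ to a position from which $W$ is not even reachable. The Dijkstra-style procedure you invoke computes the correct fixpoint, so only the wording needs fixing.
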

\begin{proof}
\emph{Finite arena.} Under the depth bound $\delta$, terms of depth $\le\delta$
over the finite atomic signature form a finite set $\Msg_\delta$; every reachable
$\know'$ and $\mathrm{avail}$ is a subset of it, and each honest local state
$\ell_a$ ranges over the finitely many protocol states, so positions
$(\know',\vec\ell,\mathrm{avail})$ are finitely many. Bounded message size making
DY reachability finite (and decidable) is standard~\citep{RamSu03}.
\textsc{Attacker}'s moves (rule instances with premises in $\know'$, and
acquisitions) are finite at each position; the honest agents' moves are
\emph{determined} by $\vec\ell$ via $\sigma_H$, so they contribute no branching
beyond the protocol's own nondeterminism. The game is a finite-arena,
two-player, zero-sum \emph{quantitative reachability} game with non-negative
integer weights.

\emph{Determinacy and computability.} Such games are determined with optimal
positional strategies, and their values are computable by backward induction over
the budget-augmented arena $(\know',\vec\ell,\mathrm{avail},b)$ with
$b\le B$~\citep{khachiyan2008shortestpath,brihaye2015pareto}: $\mathrm{val}(G)$ is
the least $B$ for which \textsc{Attacker} forces the target within spend $\le B$.
Non-negativity makes the spend monotone, so the budget dimension is acyclic and
the attractor converges in $O(|\mathit{positions}|)$ rounds.

\emph{Qualitative shadow and cheapest-attack identity.} A finite-value winning
play exhibits a DY derivation of a goal message: its computational steps are DY
rule applications and its acquisitions/injections are the intruder's interaction
with the protocol run, i.e.\ a DY attack. Conversely a DY attack against the
protocol is such a derivation interleaved with the honest run; \textsc{Attacker}
replays it. Since the honest coalition's emissions are fixed by the protocol
rather than chosen to deny the intruder, no play can raise the intruder's cost
above that of the cheapest derivation the protocol run makes available; hence
$\mathrm{val}(G(\know_\intruder^0,\varphi)) = \min_\Pi w(\Pi)$ over attacks $\Pi$
realisable against the protocol, which is the deduction cost
$\Delta_w$ of Definition~\ref{def:deltaw} relativised to the messages the run
supplies.
\end{proof}

\section{Rational Security in Weighted ATL}\label{sec:watl}

\subsection{The Logic \watl{}}

We use a weighted fragment of ATL. Formulas are built from atomic propositions,
Boolean connectives, and the cost-bounded strategic modality
\[
\coalition{A}^{\bowtie b}\,\lozenge\,\varphi,\qquad \bowtie\ \in\{<,\le,=,\ge,>\},
\]
read: coalition $A$ has a (uniform) joint strategy to reach a $\varphi$-state
along a path whose accumulated cost stands in relation $\bowtie$ to budget
$b\in\mathbb{N}$. The unweighted modality $\coalition{A}\lozenge\varphi$ is the
special case with no budget constraint. This is a resource-bounded reachability
fragment; it is small, but this is enough for what we aim to express here.

\begin{definition}[Syntax of \watl{}]
Let $\Pi$ be the set of atomic propositions and $\mathit{Ag}$ the set of agents.
State formulas $\varphi$ of \watl{} are generated by the grammar
\[
\varphi \;::=\; p \;\mid\; \neg\varphi \;\mid\; \varphi\wedge\varphi
   \;\mid\; \coalition{A}^{\bowtie b}\,\lozenge\,\varphi ,
\]
where $p\in\Pi$, $A\subseteq \mathit{Ag}$, $b\in\mathbb{N}$, and
$\bowtie\ \in\{<,\le,=,\ge,>\}$. The remaining Boolean connectives are defined as
usual. We write $\coalition{A}\lozenge\varphi$ for
$\coalition{A}^{\ge 0}\lozenge\varphi$, the unconstrained modality, since every
accumulated cost is $\ge 0$.
\end{definition}

\begin{definition}[Semantics of \watl{}]
Let $\mathcal{C}$ be a rational DY CGS and $s\in S$. Satisfaction is defined by
the usual clauses for atoms and Booleans, together with
\[
\mathcal{C},s \models \coalition{A}^{\bowtie b}\,\lozenge\,\varphi
\]
iff there exists a uniform joint strategy $\sigma_A$ for the coalition $A$ such
that, for \emph{every} run $\lambda\in\mathit{out}(s,\sigma_A)$, there is an index
$j\ge0$ with $\mathcal{C},\lambda[j]\models\varphi$ and
\[
\sum_{i<j}\cost\big(\sigma_A(\lambda[{\le}i])\big) \;\bowtie\; b ,
\]
i.e.\ the cost accumulated by $A$ strictly before reaching the first
$\varphi$-state stands in relation $\bowtie$ to the budget $b$. Costs of agents
outside $A$ are not charged to $A$.
\end{definition}

Two features of this fragment deserve emphasis. First, the budget is accumulated
along the path, not per step, so $\coalition{A}^{<b}\lozenge\varphi$ expresses
``$A$ can force $\varphi$ for a total outlay below $b$'', which is precisely the
form a rational-attack witness takes. Second, the only temporal operator is
reachability $\lozenge$; there is no $\mathsf{U}$, no nesting of temporal
operators inside the modality, and no strategy contexts. This is deliberate: the
fragment is exactly as expressive as we need for the ``no positive-utility
attack'' criterion.

\subsection{Rational Security}

Let $\mathit{viol}$ be the proposition marking states that violate the security
property of interest, and let each such state carry reward $\reward$. The
intruder profits iff it can reach $\mathit{viol}$ while spending strictly less
than the reward it collects.

\begin{definition}[Rational security]\label{def:ratsec}
A protocol $P$ with reward function $\reward$ is \emph{rationally secure} iff for
every goal reward value $R$ in the range of $\reward$,
\[
\mathcal{C}, s_0 \;\not\models\; \coalition{\intruder}^{< R}\,\lozenge\,(\mathit{viol}\wedge \reward{=}R).
\]
Equivalently: the intruder has no strategy reaching a violation of value $R$ at
total cost below $R$, i.e., no attack yields positive utility.
\end{definition}

This is the RPD  criterion~\citep{garay2013rational} expressed in the object
logic. It is strictly more permissive than DY security, which is the qualitative
statement $\mathcal{C},s_0\not\models\coalition{\intruder}\lozenge\,\mathit{viol}$.

\begin{proposition}[Refinement]
DY security implies rational security, but not conversely. There exist protocols
$P$ and reward functions such that $P$ is DY-insecure yet rationally secure.
\end{proposition}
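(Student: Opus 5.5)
The proof has two parts: an implication read off the semantics of \watl{}, and a small separating example.

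\emph{DY security implies rational security.} I will argue by contraposition. Suppose rational security fails. Then there is a value $R$ in the range of $\reward$ with
\[
\mathcal{C},s_0\models\coalition{\intruder}^{<R}\lozenge(\mathit{viol}\wedge\reward{=}R).
\]
The semantic clause supplies a uniform strategy $\sigma_\intruder$ such that every run in $\mathit{out}(s_0,\sigma_\intruder)$ reaches a $(\mathit{viol}\wedge\reward{=}R)$-state, with accumulated cost $<R$ before that state.

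I then weaken in two steps. First, drop the cost condition: the accumulated cost is a sum of naturals, hence $\ge 0$, so the same witness satisfies the $\ge 0$ budget. Second, use propositional monotonicity of $\lozenge$: any state satisfying $\mathit{viol}\wedge\reward{=}R$ also satisfies $\mathit{viol}$. Together these give $\mathcal{C},s_0\models\coalition{\intruder}\lozenge\,\mathit{viol}$, so the protocol is DY-insecure.

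One subtlety needs care. The first $\mathit{viol}$-state on a run may come before the first $(\mathit{viol}\wedge\reward{=}R)$-state, but this is harmless. The unconstrained modality only requires that some index carry $\mathit{viol}$, with a cost bound that is vacuous. Neither this step nor the first uses any property of the cost function beyond non-negativity.

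\emph{Strictness.} I will exhibit a concrete finite rational DY CGS. Take a single session in which an honest agent sends $\{n\}_k$ and later accepts $n$. The intruder can reach $\mathit{viol}$ only by the acquisition $\mathsf{corrupt}$ to obtain $k$, then decrypting and injecting.

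Set $w_{\mathrm{corrupt}}=c$ and all other costs $0$ or small, and let the single violating state carry reward $R$ with $R\le c$. Every intruder strategy reaching $\mathit{viol}$ must pass through the corrupt action. By Theorem~\ref{thm:determinacy}, equivalently by inspecting the finite game, the cheapest attack costs exactly $\Delta_w=c\ge R$. Hence no strategy reaches $\mathit{viol}$ with total cost $<R$, and the protocol is rationally secure. Yet $c<\infty$ gives a DY attack, so $\coalition{\intruder}\lozenge\,\mathit{viol}$ holds and the protocol is DY-insecure.

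This example also exhibits the threshold: the protocol is rationally secure iff $R\le c$. It is therefore worth stating it parametrically in $R$.

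\emph{Main obstacle.} The main work is making the example airtight under the semantics. I must check that every path to $\mathit{viol}$ genuinely incurs the corruption cost, with no cheaper route through interception or injection alone. The argument for this is perfect cryptography: without $k$, the term $\{n\}_k$ cannot be opened, so $n$ is not in the depth-bounded closure $\overline{\know}$ unless $k$ is acquired.

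I also need to check that the cost is accounted strictly before the first violating state, as the semantics requires. I will handle this by making the violation state a successor of the injection, so the whole attack cost is charged before reaching it.
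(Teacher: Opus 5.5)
Your first part is the paper's argument read contrapositively (the paper just notes that a cost-bounded witness is in particular an unbounded one), and it is sound. You are more explicit than the paper on two points: weakening the target from $\mathit{viol}\wedge\reward{=}R$ to $\mathit{viol}$, and why an earlier $\mathit{viol}$-state is harmless under the vacuous $\ge 0$ budget.

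The weak point is the justification of your separating example. The paper does not build a protocol at all. It stipulates a finite CGS whose single violating state is reachable \emph{only} along paths containing $n$ injects of cost $c$, with $R<nc$, so ``every violating path is expensive'' holds by construction. You instead try to \emph{derive} from perfect cryptography that every path to $\mathit{viol}$ passes through $\mathsf{corrupt}$, and that argument is incomplete. Perfect cryptography only gives $n\notin\overline{\know_\intruder}$ while the intruder lacks $k$. It does not stop an honest holder of $k$ from emitting $n$ in the clear, which the intruder could then intercept and re-inject at cost $w_{\mathrm{intercept}}+w_{\mathrm{inject}}$; this is the decryption-oracle analogue of the relay in Section~\ref{sec:relay}. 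Nor does it stop the honest run from reaching acceptance with no intruder spend at all. Your description (``an honest agent sends $\{n\}_k$ and later accepts $n$'') leaves open who supplies $n$. If the intended responder runs the protocol, one of these cheap routes exists, and with your other costs set to $0$ or small, the separation collapses. You need to pin down the model so that no honest transition ever outputs $n$ (e.g.\ the responder does not run, and $\mathit{viol}$ means the initiator accepts an intruder-injected $n$), or simply stipulate the transition structure as the paper does.

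Two smaller points. First, $\mathsf{corrupt}$ is not among the intruder actions listed in the CGS definition (intercept, block, inject, compute), so you should add it explicitly as a costed action; the paper's witness needs only $\mathsf{inject}$. Second, Theorem~\ref{thm:determinacy} is about the deduction-game value, so applying it to the \watl{} formula would need the later Equivalence theorem. Your direct inspection of the finite CGS is what actually carries that step.
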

\begin{proof}
\emph{DY security implies rational security.} Suppose $P$ is DY-secure, i.e.\
$\mathcal{C},s_0\not\models\coalition{\intruder}\lozenge\,\mathit{viol}$: the
intruder has no strategy whatsoever reaching a violating state. So, clearly, the intruder has
no strategy reaching one within any budget, since a cost-bounded witness is in
particular an unbounded witness. Hence
$\mathcal{C},s_0\not\models\coalition{\intruder}^{<R}\lozenge(\mathit{viol}\wedge
\reward{=}R)$ for every $R$ in the range of $\reward$, which is exactly rational
security (Definition~\ref{def:ratsec}).

\emph{The converse fails.} We exhibit a protocol that is DY-insecure yet
rationally secure. Let $P$ be a protocol whose \emph{only} attack requires the
intruder to inject $n$ distinct forged messages, each of cost $c>0$, before a
violating state is reached; suppose the violation carries reward $R$, and choose
the parameters so that
\[
R \;<\; n\,c .
\]
Formally, take $\mathcal{C}$ with a single violating state $s_{\mathit{viol}}$
reachable from $s_0$ only along paths containing $n$ occurrences of
$\mathsf{inject}$ actions, with $\cost(\mathsf{inject}(\cdot))=c$ and
$\reward(s_{\mathit{viol}})=R$.

The protocol is DY-insecure: the intruder \emph{can} inject the $n$ messages, so
$s_{\mathit{viol}}$ is reachable and
$\mathcal{C},s_0\models\coalition{\intruder}\lozenge\,\mathit{viol}$ holds. But it
is rationally secure: any strategy reaching $s_{\mathit{viol}}$ accumulates cost at
least $nc$, and since $nc > R$, no strategy reaches $s_{\mathit{viol}}$ with
accumulated cost strictly below $R$. Hence the witness formula
$\coalition{\intruder}^{<R}\lozenge(\mathit{viol}\wedge\reward{=}R)$ is
unsatisfiable, so no attack yields positive utility: every attack has utility
$R - nc < 0$, and the intruder's best response is to abstain.

The separation is therefore governed by the single inequality $R \lessgtr nc$,
which is a \emph{computable threshold}: the protocol flips from rationally secure
to rationally insecure exactly when the reward crosses the cheapest attack's cost.
% Section~7 exhibits a natural such $P$, where the threshold is the coercer's
% bribe-to-benefit ratio in ThreeBallot rather than an artificial parameter.
\end{proof}

\subsection{Rational Security via Deduction Value}

We can now state rational security \emph{intrinsically}, in terms of the
deduction layer, and reconcile it with the \watl{} formulation of Section~\ref{sec:watl}.

\begin{definition}[Deduction-rational security] \label{dedS}
Let $\varphi_{\mathit{viol}}$ characterise the messages whose derivation
constitutes a security violation of value $R$. A protocol is
\emph{deduction-rationally secure} iff
\[
\mathrm{val}\big(G(\know_\intruder^{0}, \varphi_{\mathit{viol}})\big) \;\ge\; R,
\]
i.e.\ the cheapest attack the intruder can force costs at least what the violation
is worth.
\end{definition}

\begin{theorem}[Equivalence]
A protocol is deduction-rationally secure in the sense of Definition~\ref{dedS} iff it is rationally secure in the sense
of Definition~\ref{def:ratsec} (no positive-utility \watl{} attack strategy exists).
\end{theorem}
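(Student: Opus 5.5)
The plan is to prove both directions by contraposition, passing through the cheapest-attack characterisation of Theorem~\ref{thm:determinacy}: $\mathrm{val}(G(\know_\intruder^0,\varphi_{\mathit{viol}}))$ equals the cost of the cheapest DY attack achieving $\varphi_{\mathit{viol}}$ against the protocol. The task therefore reduces to showing that this cheapest attack cost is $<R$ exactly when $\mathcal{C},s_0\models\coalition{\intruder}^{<R}\lozenge(\mathit{viol}\wedge\reward{=}R)$. To do this I will fix a correspondence between plays of the deduction game and runs of the rational DY CGS. A game position $(\know',\vec\ell,\mathrm{avail})$ corresponds to the CGS state whose intruder knowledge is $\know'$ and whose honest local states are $\vec\ell$. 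Acquisitions (intercept, block, inject, corrupt) match the intruder actions $\mathsf{intercept},\mathsf{block},\mathsf{inject}$. Deduction-rule applications are collapsed into $\mathsf{compute}$ steps, each with $\cost(\mathsf{compute})$ set to the $w_{\mathrm{ded}}$ of the rules performed. I will also assume that $\mathit{viol}\wedge\reward{=}R$ holds at a state exactly when $\overline{\know_\intruder}$ contains a message satisfying $\varphi_{\mathit{viol}}$ of value $R$. This calibration between $w$ and $\cost$ is part of what "the same protocol" means, and I will state it explicitly as the standing hypothesis.

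\emph{Direction (rationally insecure $\Rightarrow$ not deduction-rationally secure).} Suppose some uniform $\sigma_\intruder$ witnesses $\coalition{\intruder}^{<R}\lozenge(\mathit{viol}\wedge\reward{=}R)$. Take the run of $\mathit{out}(s_0,\sigma_\intruder)$ produced by the honest protocol play. The honest agents play $\sigma_H$ blind to the attack, so this run is a legal play of $G$. Reading off its intruder actions gives an attack, and summing their weights gives its cost, which by the calibration equals the accumulated $\cost$, hence is $<R$. Theorem~\ref{thm:determinacy} then yields $\mathrm{val}(G)<R$. This direction only needs \emph{some} run, because the game value is a minimum over attacks that the protocol run makes available.

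\emph{Direction (not deduction-rationally secure $\Rightarrow$ rationally insecure).} Suppose $\mathrm{val}(G)<R$. Theorem~\ref{thm:determinacy} provides an optimal positional \textsc{Attacker} strategy, and hence a cheapest attack $\Pi$ with $w(\Pi)<R$. Replay it in $\mathcal{C}$ as an intruder strategy: at each history, play the next acquisition or compute step of $\Pi$. We must check two things. First, every action is enabled, since $\mathsf{inject}(m)$ requires $m\in\overline{\know_\intruder(s)}$, which is guaranteed because $\Pi$ derived $m$ earlier. Second, the accumulated cost before the first $\mathit{viol}$ state is $w(\Pi)<R$. Sharing of sub-derivations is paid once in both settings (Definition~\ref{def:dercost}), so the sums agree rather than the CGS cost overcounting.

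\emph{Main obstacle.} The obstacle is \emph{uniformity} together with the universal quantifier over $\mathit{out}(s_0,\sigma_\intruder)$. The game value is computed against the deterministic protocol flow, whereas the \watl{} modality demands success on every run of the outcome, including runs $\sim_\intruder$-indistinguishable from the intended one on which the same forward has a different effect. My first approach is to note that in the deduction game the honest coalition's moves depend only on $\vec\ell$ and not on $\know'$. The replayed strategy is therefore a function of the intruder's observation sequence alone, which makes it uniform by construction. When $\sigma_H$ is deterministic, $\mathit{out}(s_0,\sigma_\intruder)$ is a singleton, so the universal quantifier collapses. If residual nondeterminism remains, I will rely on the fact that $\mathrm{val}(G)$ is the \emph{guaranteed} cost, the value of the game with the protocol's nondeterminism resolved adversarially. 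The same positional strategy then bounds the cost on every branch, which is exactly the $\forall\lambda$ clause of the \watl{} semantics. If this identification fails under genuine hidden state, I will restrict the theorem to the case where the deduction game is itself played under the intruder's observation, so that its value is a uniform value.
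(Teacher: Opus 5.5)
Your two directions follow the paper's route: read a cheap play of $G$ off a witnessing run, then schedule a cheapest derivation as an intruder strategy. Making the $w$/$\cost$ calibration an explicit standing hypothesis improves on the paper, which uses it silently.

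The genuine gap is the uniformity step of your second direction. That honest moves depend only on $\vec\ell$ and not on $\know'$ says nothing about whether \emph{your} strategy separates $\sim_\intruder$-equivalent histories. What matters is what the replayed strategy reads. An optimal positional \textsc{Attacker} strategy reads the whole position $(\know',\vec\ell,\mathrm{avail})$, and $\vec\ell$ is exactly the component $\sim_\intruder$ discards. Replaying it position-by-position is therefore not uniform in general.

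The paper gets uniformity differently. It fixes the single derivation realised on the play and runs its topological order \emph{open-loop}, so $\sigma_\intruder(h)$ is the $|h|$-th scheduled action. On histories not equivalent to a scheduled prefix, it can pick any enabled action uniformly per class; enabledness transfers because $\Act_\intruder$ is constant on $\sim_\intruder$-classes. This is uniform simply because $\sim_\intruder$-related histories have equal length. If ``play the next step of $\Pi$'' is meant this way, say so and give this reason. With deterministic $\sigma_H$ the outcome set is then a singleton, and the $\forall\lambda$ clause and the bound $w(\Pi)<R$ are immediate.

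Your fallback for residual nondeterminism does not work, and the obstacle is substantive. The guarantee in $\mathrm{val}(G)$ is achieved by an \textsc{Attacker} that adapts to $\vec\ell$, which a uniform intruder cannot do. So ``the same positional strategy bounds the cost on every branch'' yields a non-uniform strategy.

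The two-session use-case of Section~\ref{sec:relay} is a concrete instance. Suppose the choice of live session is an honest move the intruder does not observe. The positional \textsc{Attacker} reads it off $\vec\ell$ and pays at most the lucky-guess cost $w_{\mathrm{intercept}}+w_{\mathrm{inject}}$. Any uniform $\sigma_\intruder$ that succeeds on every run of $\mathit{out}(s_0,\sigma_\intruder)$ must pay for the hedge. For $R$ above the former but not above the hedge cost, $\mathrm{val}(G)<R$ yet the \watl{} witness fails. The restriction you mention is therefore necessary, not optional. The paper's scheduling argument likewise only justifies ``every outcome run reaches $\psi$'' when one fixed schedule is executable on all outcomes.

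Two smaller points.
\begin{itemize}
\item You read $\mathrm{val}(G)$ as a minimum over single runs in the first direction (``only needs some run'') but as a worst-case guarantee in the second. Under the latter reading, translate all of $\sigma_\intruder$ into an \textsc{Attacker} strategy rather than one run. This is possible, since \textsc{Attacker} observes more and every outcome is cheap.
\item Your calibration must ensure that each message injected in $\mathcal{C}$ has had its derivation paid for. $\mathsf{inject}(m)$ is enabled on all of $\overline{\know_\intruder(s)}$, so a run can skip $\mathsf{compute}$ and cost less than the attack you read off it. That breaks ``by the calibration equals'' in exactly the direction you need. The calibration must also give $\mathsf{corrupt}$ a counterpart in $\mathcal{C}$, whose action set has none, or forge-type attacks cannot be replayed.
\end{itemize}
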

\begin{proof}
We show that a positive-utility \watl{}
attack of value $R$ exists iff $\mathrm{val}(G(\know_\intruder^0,
\varphi_{\mathit{viol}}))<R$. Fix the violation value $R$ and write
$\psi=(\mathit{viol}\wedge\reward{=}R)$.

\emph{($\Rightarrow$) \watl{} witness $\to$ cheap \textsc{Prover} play.} Suppose
$\mathcal{C},s_0\models\coalition{\intruder}^{<R}\lozenge\,\psi$, witnessed by a
uniform strategy $\sigma_\intruder$ whose every outcome run reaches a
$\psi$-state with accumulated cost $<R$. Along any such run, the sequence of
$\mathsf{inject}/\mathsf{compute}$ actions the intruder plays corresponds
step-by-step to applications of weighted DY rules: an $\mathsf{inject}(m)$ is
licensed only when $m\in\overline{\know_\intruder}$, i.e.\ when a derivation of
$m$ exists, and its cost is the same $\cost$ charged in the game. Reading these
rule applications off the run yields a \textsc{Prover} strategy in
$G(\know_\intruder^0,\varphi_{\mathit{viol}})$ whose derivations realise the goal
message; the honest actions the run depends on are exactly the messages
\textsc{Blocker} would have to release, and since the run is an \emph{outcome} of
$\sigma_\intruder$ against the honest protocol, \textsc{Blocker} cannot withhold
them. The total weight of the \textsc{Prover} play equals the accumulated action
cost of the run, which is $<R$. Hence, $\mathrm{val}(G)<R$.

\emph{($\Leftarrow$) cheap \textsc{Prover} play $\to$ \watl{} witness.} Suppose
$\mathrm{val}(G)<R$, witnessed by a \textsc{Prover} strategy of value $<R$ that,
against every \textsc{Blocker}, derives a goal message $m\models
\varphi_{\mathit{viol}}$. We schedule this derivation into a \watl{} strategy.
A weighted DY derivation is a finite proof DAG \cite{Suresh03}; fix any topological order of its
rule applications consistent with premise-before-conclusion. The intruder strategy
$\sigma_\intruder$ plays the corresponding $\mathsf{compute}/\mathsf{inject}$
action at each step, in this order, performing $\mathsf{intercept}$ of an honest
message exactly when the derivation consumes a \textsc{Blocker}-controlled
premise. Two points make this well-defined as a \emph{uniform} strategy. First,
because the topological schedule depends only on the derivation, not on
runtime observations, $\sigma_\intruder$ prescribes the same action on any two
histories in the same $\sim_\intruder$-class, so uniformity holds. Second, the
accumulated action cost telescopes to the weight of the proof DAG, which is
$\mathrm{val}(G)<R$; and the schedule terminates in a state where $m$ has been
derived and injected, i.e.\ a $\psi$-state of value $R$. Thus every outcome run
reaches $\psi$ with cost $<R$, witnessing
$\coalition{\intruder}^{<R}\lozenge\,\psi$.

% Both directions preserve the same numeric quantity (i.e., proof weight equals
% accumulated action cost), so the game value and the least witnessing budget
% coincide, and one is below $R$ iff the other is.
\end{proof}

% This equivalence is a nice theoretical ``gain''. It says the strategic-logic security
% notion and the game-theoretic deduction notion are two views of the same
% quantity (i.e., the value of the attack game) and it lets us compute rational security
% in whichever layer is more convenient: the \watl{} layer for expressing rich
% temporal goals, or the deduction layer for the sharp combinatorial optimisation over
% attack proofs. In particular, the cheapest-derivation computation of
% Proposition~\ref{prop:cheap} becomes a sound and complete decision procedure for rational
% secrecy, giving a deduction-theoretic core to the whole framework.

% \subsection{What Rationality Buys: Structural Consequences}

% Three consequences distinguish the weighted theory from classical DY and justify
% the extra machinery.

% \paragraph{Monotonicity fails, usefully.} Classical DY knowledge is monotone:
% more knowledge never removes an attack. Deduction \emph{value} is monotone in the
% opposite, informative direction (i.e., $\Delta_w$ can only decrease as $\know$
% grows), so the security \emph{margin} $R - \mathrm{val}(G)$ is the right
% sensitivity measure: it quantifies how much cheaper the environment or a
% corruption would have to make an attack before the protocol becomes rationally
% insecure. Classical DY offers no such margin; it reports only presence or absence.

\paragraph{Attack minimality is well-defined.} We note that, in classical DY, ``the attack'' is
any derivation; in our weighted theory, there is a canonical \emph{cheapest} attack,
and Proposition~\ref{prop:cheap} computes it. This gives protocol designers an actionable
target (i.e., raise the cost of the cheapest attack above $R$) rather than the binary
and often unachievable goal of eliminating all attacks.

\paragraph{Composition of costs.} We also note that deduction costs compose additively along proof
structure, so the price of a compound attack decomposes over its sub-derivations.
% This yields a compositional proof principle: to lower-bound
% $\mathrm{val}(G(\know,\varphi))$ it suffices to lower-bound the value of any
% frontier cut through the derivation hypergraph:
We have weighted analogue of the
classical ``the intruder must obtain one of these secrets first'' argument, now
carrying quantitative force.

\section{Decidability and Complexity}\label{sec:dec}

\begin{theorem}[Decidability]
Let $\mathcal{C}$ be a finite rational DY CGS with bounded message depth $\delta$
and integer costs and rewards. Then rational security of $\mathcal{C}$ is
decidable.
\end{theorem}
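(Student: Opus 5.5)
The approach is to reduce rational security to finitely many instances of \watl{} model checking over a finite, budget-augmented arena, and then decide each instance by a finite search over memoryless uniform strategies. By Definition~\ref{def:ratsec}, rational security is the conjunction, over the finitely many values $R$ in the range of $\reward$ (finite because $S$ is finite), of $\mathcal{C},s_0\not\models\coalition{\intruder}^{<R}\lozenge(\mathit{viol}\wedge\reward{=}R)$. It therefore suffices to decide a single formula $\coalition{\intruder}^{<R}\lozenge\psi$ with $\psi=\mathit{viol}\wedge\reward{=}R$.

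\textbf{Step 1: a finite intruder action space.} The depth bound $\delta$ makes $\Msg_\delta$ finite, so $\overline{\know_\intruder(s)}$ is finite and computable for every $s$ (Proposition~\ref{prop:cheap} even yields the cheapest derivation of each element). Consequently $\Act_\intruder(s)$ is finite and effectively listable.

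\textbf{Step 2: budget augmentation.} Because costs lie in $\mathbb{N}$, every run whose accumulated cost is $<R$ passes through only the budget levels $0,\dots,R-1$. I will form the product arena $S\times\{0,\dots,R-1,\top\}$. Its transitions add $\cost(a)$ to the budget component and saturate at $\top$ once the total reaches $R$. Its target set is $\{(s,b):s\models\psi,\ b<R\}$. The intruder's observation is $(\know_\intruder(s),b)$; the intruder can see $b$ because it is determined by its own played actions. In this product, $\coalition{\intruder}^{<R}\lozenge\psi$ holds at $s_0$ iff the intruder has a uniform strategy forcing the target from $(s_0,0)$. A run leaving the target region never reenters it, since the budget is monotone, and the budget is bounded. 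A winning play must therefore hit the target within $|S|\cdot R$ steps, or else the relevant outcomes cycle outside the target.

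\textbf{Step 3: deciding uniform reachability.} This is the main obstacle. Under memoryful uniform strategies with imperfect information, ATL model checking is undecidable in general~\citep{dima2011undecidable}, so I will adopt the bounded, memoryless regime the paper declares in its related work. I will interpret the bounded rational intruder as playing memoryless uniform strategies over the augmented arena; a memoryless strategy there is budget-aware, so it retains some memory of expenditure. There are then only finitely many such strategies, at most $|\Act_\intruder|^{|S/{\sim_\intruder}|\cdot(R+1)}$. For each candidate strategy, fixing it together with $\sigma_H$ leaves a finite graph. On that graph I check that every path from $(s_0,0)$ reaches the target, which is a linear-time check that no path reaches a target-free cycle or dead end.

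I expect the justification of the memoryless restriction to be the delicate point. I would argue it through the budget-acyclic structure: every non-zero-cost step strictly increases $b$, so the only recurrent behaviour occurs in zero-cost loops, where memory cannot help reach the target more cheaply. Failing a full argument, I would state decidability for the bounded (memoryless, budget-indexed) intruder and cite the $\Delta_2^{\mathrm P}$ bound of~\citep{jamroga2006complete} on the augmented structure of size $O(|S|\cdot R)$. This gives decidability, with complexity pseudo-polynomial in $R$.
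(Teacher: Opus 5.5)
Your Steps 1 and 2 coincide with the paper's proof: the finite range of $\reward$, finite action sets from $\delta$, and the arena $S\times\{0,\dots,R\}$ with target $b<R$. Step 3, however, does not prove the theorem as stated.

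The modality $\coalition{\intruder}^{<R}\lozenge\psi$ quantifies over the uniform strategies of the paper's definition. These are functions of the whole history under the synchronous, perfect-recall lifting of $\sim_\intruder$. Strategies indexed only by $(\know_\intruder(s),b)$ are a strict subclass. Your sufficiency argument fails because, under imperfect information, memory is needed not for zero-cost loops but to remember earlier observations. For instance, take all costs $0$ and two branches with these properties:
\begin{itemize}
\item the intruder distinguishes them at step~1;
\item whatever it plays, they reach $\sim_\intruder$-related states at step~2 (with cumulative knowledge: the same two messages learned in opposite orders);
\item the goal is then reached by action $x$ on one branch and by $y$ on the other.
\end{itemize}
A history-based uniform strategy wins, since the two histories differ at position~1. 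Every $(\know_\intruder,b)$-indexed strategy loses on one branch.

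So your enumeration is sound but incomplete as an attack search, and it errs in the unsafe direction: it can certify as rationally secure a protocol that a uniform intruder attacks profitably. Your fallback to a memoryless, budget-indexed intruder proves a different theorem. The retreat was also unnecessary. The undecidability of \citet{dima2011undecidable} concerns coalitions of at least two agents with different information (a distributed-synthesis phenomenon). Here the coalition is $\{\intruder\}$ alone, facing fixed $\sigma_H$ and nondeterminism.

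The missing idea is the knowledge-set (subset) construction on top of your budget-augmented arena. Use positions $(Q,b)$, where $Q\subseteq S$ lies within one $\sim_\intruder$-class. $Q$ collects the last states of histories that are consistent with the observations and actions so far and have not yet hit the target. One $b$ suffices because a uniform strategy plays the same actions along one observation sequence. The moves are as follows.
\begin{itemize}
\item The intruder picks $a\in\Act_\intruder(Q)$, which is well defined because enabledness depends only on the class.
\item The environment picks a $\sim_\intruder$-class met by the successors of $Q$ under $a$ and $\sigma_H$.
\item Play moves to $(Q',b+\cost(a))$, where $Q'$ is the set of successors in that class minus target positions.
\end{itemize}
$Q=\emptyset$ is winning, and $Q\neq\emptyset$ with $b\ge R$ is losing.

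This is a finite perfect-information reachability game with at most $2^{|S|}(R+1)$ positions. The intruder wins it iff some uniform strategy witnesses $\coalition{\intruder}^{<R}\lozenge\psi$. The direction from a uniform strategy to the game uses K\"onig's lemma; the converse replays a positional strategy along observation sequences. Its attractor therefore decides rational security under the paper's semantics, in time exponential in $|S|$ and polynomial in $R$.

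For comparison, the paper closes with the plain attractor $\exists a\in\Act_\intruder(s)\,\forall\vec a_H$ over $(s,b)$. This lets the intruder act differently at $\sim_\intruder$-related positions, so it is exact in general only under perfect information and over-approximates the intruder otherwise. You were right that Step~3 has to confront uniformity, and the subset construction is what does so for both routes.
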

\begin{proof}
\emph{Finiteness.} By the depth bound $\delta$, the derivable set
$\overline{\know_\intruder(s)}$ contains only terms of depth $\le\delta$ over the
finite atomic-message signature, so it is finite for every $s$. So,
$\Act_\intruder(s)$ (i.e., the intercept/inject/compute actions available from
$\know_\intruder(s)$) is finite, and since $S$, $\mathit{Ag}$ and the honest
action sets are finite, $\mathcal{C}$ is a finite CGS. The reward function has
finite range $\{R_1,\dots,R_k\}\subseteq\mathbb{N}$.

\emph{Reduction to a finite conjunction.} By Definition~\ref{def:ratsec},
rational security is
$\bigwedge_{i=1}^{k}\ \mathcal{C},s_0\not\models
\coalition{\intruder}^{<R_i}\lozenge(\mathit{viol}\wedge\reward{=}R_i)$,
a finite conjunction because $\reward$ has finite range. It therefore suffices to
decide each cost-bounded reachability formula
$\coalition{\intruder}^{<R}\lozenge\,\psi$ with
$\psi=(\mathit{viol}\wedge\reward{=}R)$.

\emph{The budget-augmented arena.} Fix $R$ and build a two-player,
turn-based, zero-sum reachability game $\mathcal{G}_R$ whose protagonist is the
intruder and whose antagonist is the (fixed-strategy) honest environment. Its
positions are pairs $(s,b)\in S\times\{0,1,\dots,R\}$, where $b$ is the
\emph{budget already spent}; the initial position is $(s_0,0)$. From
$(s,b)$ the protagonist chooses an enabled
$a\in\Act_\intruder(s)$ of cost $\cost(a)$; the environment resolves the honest
actions $\vec{a}_H$; and the game moves to $(s',\,b+\cost(a))$ where
$s'=\mathit{tr}(s,(a,\vec{a}_H))$, provided $b+\cost(a)\le R$ (moves that would
exceed the budget are disabled, capping the second component at $R$). The target
set is $T=\{(s,b)\mid s\models\psi,\ b<R\}$: a violation of value $R$ reached
having spent strictly less than $R$.

\emph{Decision by backward induction.} The arena $\mathcal{G}_R$ is finite:
$|S|\cdot(R{+}1)$ positions. Since all weights are non-negative integers and the
budget is monotonically non-decreasing and capped at $R$, no play cycles through
the budget dimension without terminating, so attractor computation converges.
Compute the protagonist's reachability attractor of $T$ by the standard
fixed-point:
$\mathrm{Attr}^0=T$, and $\mathrm{Attr}^{n+1}=\mathrm{Attr}^n\cup\{(s,b)\mid
\exists a\in\Act_\intruder(s)\ \forall \vec{a}_H:\ (s',b{+}\cost(a))\in
\mathrm{Attr}^n\}$, to fixpoint. Then,
$(\mathcal{C},s_0)\models\coalition{\intruder}^{<R}\lozenge\psi$ iff
$(s_0,0)\in\bigcup_n\mathrm{Attr}^n$; equivalently, iff the intruder can force a
value-$R$ violation while keeping total spend below $R$, which is exactly a
positive-utility attack of value $R$. Negating and conjoining over the finitely
many $R_i$ decides rational security.
\end{proof}

\begin{theorem}[Complexity]
Under perfect information for the intruder, rational security is decidable in
time polynomial in $|\mathcal{C}|$ and in the largest reward $R_{\max}$ (pseudo\-/
polynomial in the weights). Under imperfect information (uniform strategies), the
problem is \textsc{PSPACE}-hard.
\end{theorem}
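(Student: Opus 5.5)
The plan has two parts, matching the two claims in the statement.

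\emph{Upper bound, perfect information.} I would reuse the budget-augmented arena $\mathcal{G}_R$ from the proof of the Decidability theorem. Under perfect information, uniformity is vacuous, since $\sim_\intruder$ is the identity on histories. So $\mathcal{C},s_0\models\coalition{\intruder}^{<R}\lozenge\psi$ is exactly membership of $(s_0,0)$ in the protagonist's attractor of $T$.

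\begin{itemize}
\item The arena has $|S|\cdot(R{+}1)$ positions. Each position has at most $|\Act_\intruder(s)|\cdot|\Act_H|$ outgoing edges, both bounded by $|\mathcal{C}|$.
\item A standard linear-time attractor computation (counter per position, backward propagation) runs in $O(|\mathcal{C}|\cdot R)$.
\item Rational security is the conjunction over the at most $|S|$ distinct reward values $R_i\le R_{\max}$. The total time is therefore $O(|S|\cdot|\mathcal{C}|\cdot R_{\max})$, which is polynomial in $|\mathcal{C}|$ and $R_{\max}$.
\item One point must be checked: memoryless strategies on $\mathcal{G}_R$ are memoryful strategies on $\mathcal{C}$ that depend only on $(s,b)$. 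Conversely, positional determinacy of reachability games shows no extra memory helps. Hence the attractor characterisation is exact.
\item Finally, since $R_{\max}$ is written in binary, this bound is pseudo-polynomial, which is the parenthetical claim.
\end{itemize}

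\emph{Lower bound, imperfect information.} I would reduce from a known \textsc{PSPACE}-hard problem about uniform reachability strategies. The natural source is the hardness of reachability games of imperfect information for a single player against nature (blind or partially observing). A cleaner alternative is \textsc{QBF}, as in the hardness proofs for $\mathsf{ATL}$ with imperfect information and memoryful strategies.

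Concretely, given a QBF $\forall x_1\exists y_1\cdots\,\phi$, I would build a rational DY CGS with the following pieces.
\begin{itemize}
\item The honest environment chooses universal values by emitting opaque ciphertexts. The intruder can forward these but cannot read them, so states differing in them are $\sim_\intruder$-related, exactly as in the observation/effect separation of Section~3.
\item The intruder chooses existential values via $\mathsf{inject}$.
\item Dependency of $y_j$ only on $x_1,\dots,x_j$ is enforced by revealing $x_i$ (sending its key) at the right round. This way the knowledge component, and hence observation, records exactly the earlier universal choices.
\item A final phase, in which the honest agent checks a clause picked by the environment, leads to $\mathit{viol}$ iff the assignment satisfies it.
\item All intruder costs are set to $0$ and the reward to $R=1$. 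Then $\coalition{\intruder}^{<1}\lozenge\mathit{viol}$ holds iff a uniform winning strategy exists iff the QBF is true.
\item Because rewards and costs are constants, the reduction is polynomial, and the hardness holds even with unary weights.
\end{itemize}

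\emph{Main obstacle.} I expect the hardest step to be making the QBF gadget genuinely a DY CGS. Indistinguishability must be \emph{induced by} the intruder's knowledge component, $\know_\intruder(s)=\know_\intruder(s')$, rather than by an arbitrary observation function. I would handle this by making every universal choice an encrypted nonce whose key is released later, and checking that the depth bound $\delta$ stays constant.

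A secondary subtlety is recall. If only memoryless uniform strategies are intended, QBF-style reductions usually yield $\Delta_2^{\mathrm P}$-type bounds, as in~\citep{jamroga2006complete}, rather than \textsc{PSPACE}. I would therefore fix the memoryful uniform semantics of Section~3, under which synchronous perfect recall in the finite, horizon-bounded arena gives \textsc{PSPACE}-hardness, and note that the game tree has polynomial depth.
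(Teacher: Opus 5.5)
Your upper bound is the paper's argument: the budget-augmented arena $\mathcal{G}_R$, a linear-time attractor per reward value, and at most $|S|$ reward values. Your appeal to positional determinacy justifies a step the paper leaves implicit.

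For the lower bound you depart from the paper. The paper zeroes the costs and sets rewards to $1$, as you do, but then simply cites \citep{jamroga2006complete} on the grounds that a rational DY CGS ``is exactly an iCGS''. Both of your caveats about that route are sound. First, the cited result concerns memoryless strategies, under which a single $\coalition{\intruder}\lozenge\mathit{viol}$ is in \textsc{NP}; it therefore cannot give \textsc{PSPACE}-hardness unless \textsc{NP}$=$\textsc{PSPACE}. Second, ``our CGS is an iCGS'' is the inclusion that helps upper bounds, whereas hardness needs the converse embedding of observation relations into equality of knowledge sets. So a direct construction under the memoryful uniform semantics is the right plan.

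The gap is in your gadget: as described, it is not a polynomial reduction. You let $\know_\intruder(s)$ record exactly $x_1,\dots,x_j$ at round $j$, and you let the environment pick the clause only in the final phase. The states must then carry the whole assignment, including the injected $y_i$, which means at least $2^n$ states for $n$ universal variables. Your claim that the reduction is polynomial fails on $|S|$, not on the weights.

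This is symptomatic of a deeper problem. Each $x_i$ is revealed before the intruder must answer it, and hiding a value not yet revealed is the same as the environment choosing it later. So the encryption never constrains a decision, and the game is in effect one of perfect information. On a polynomial-size arena such a game is decided in polynomial time by the very attractor of your upper bound, so any hardness in your construction can only come from the blow-up.

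The missing idea is to hide the \emph{verification target} rather than the universal values:
\begin{itemize}
\item Let the first environment move place a nondeterministically chosen clause index $C$ in the honest verifier's local state, never on the wire.
\item Keep in the state only the round index, the current $x_i$, and one bit recording whether $C$ is already satisfied. The environment's $x_i$ and the intruder's injected $y_i$ update this bit.
\item Let $\know_\intruder(s)$ contain only the current round's observation, so that synchronous perfect recall of histories, not the state, supplies the intruder's memory.
\end{itemize}
Uniformity across the hidden $C$ then forces a single Skolem strategy $y_i=f_i(x_1,\dots,x_i)$. The proposition $\mathit{viol}$, marking final states whose bit is set, is reached on every run iff this strategy satisfies every clause, i.e.\ iff the QBF is true. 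And $|S|$ is polynomial.

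This uses a non-cumulative observable knowledge component. The rational DY CGS definition permits that, but monotone DY knowledge does not. If you insist on monotonicity, encode $x_i$ in the \emph{timing} of a key release, whose effect on $\know_\intruder$ is the same on both branches once the round ends. This is exactly where your ``main obstacle'' has to be resolved.
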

\begin{proof}
\emph{Upper bound (perfect information).} When the intruder has perfect
information (i.e., when $\sim_\intruder$ is the identity) the game $\mathcal{G}_R$ of the
previous proof is a two-player \emph{perfect-information} reachability game on
$|S|\cdot(R{+}1)$ positions. Its attractor is computed in time linear in the
number of edges, i.e.\ $O(|S|\cdot|\Act|\cdot R)$ per reward value, and there are
at most $|S|$ distinct reward values, giving
$O(|S|^2\cdot|\Act|\cdot R_{\max})$ overall. This is polynomial in $|\mathcal{C}|$
and in $R_{\max}$, hence \emph{pseudo-polynomial} in the numeric weights (it
depends on the magnitude $R_{\max}$, not merely its bit-length). The dependence on
$R_{\max}$ is unavoidable in general, as the budget dimension must be tracked.

\emph{Lower bound (imperfect information).} Restricting to \emph{uniform}
strategies (Definition~5) subsumes qualitative $\mathsf{ATL}$ model-checking under
imperfect information and memoryless strategies: set every cost to $0$ and every
reward to $1$, so the budget constraint $<R$ becomes vacuous and the formula
$\coalition{\intruder}^{<1}\lozenge\,\mathit{viol}$ collapses to the plain
$\coalition{\intruder}\lozenge\,\mathit{viol}$. Deciding the latter over an iCGS
with uniform strategies is already \textsc{PSPACE}-hard~\citep{jamroga2006complete},
and the reduction is immediate since our CGS is exactly an iCGS whose
indistinguishability is $\sim_\intruder$. Hence, rational security under imperfect
information is \textsc{PSPACE}-hard.

\emph{Regaining tractability.} The blow-up is caused by the uniform-strategy
quantification over an unrestricted indistinguishability relation, not by the
budget dimension (which only adds the polynomial factor $R_{\max}$). Restricting
the \emph{information flow}, so that the intruder's coalition enjoys the
A-cast/DS-knowledge property of \citet{belardinelli2022acast,belardinelli2025infosharing},
under which the information sets do not grow unboundedly along a strategy, keeps
the underlying qualitative model-checking decidable, and the budget augmentation
of the first proof then rides on top without affecting decidability. 
\end{proof}

\raggedbottom 

% Two remarks are to be made here. First, the budget augmentation is what makes the
% quantities tractable: because costs are non-negative integers and the budget is
% bounded by the reward, only finitely many cost levels matter. Second, imperfect
% information is essential for the voting application to follow: i.e., there, the coercer's inability to
% distinguish a real from a faked vote \emph{is} an indistinguishability of states
% for its uniform strategy, so the imperfect-information case is not a
% generalisation we may discard.

\section{Main Use-case: Rational Attacks under Session Uncertainty}
\label{sec:relay}

The ThreeBallot analysis of the previous section is deliberately
\emph{cryptography-free}: its coercion-resistance is combinatorial, so the DY
deduction layer is trivial and all the work is done by indistinguishability. We
now give a use-case at the opposite pole and beyond it, one in which the DY
\emph{deduction} layer, the \emph{cumulative} structure of the attack, and the
intruder's \emph{imperfect information} are all simultaneously load-bearing. It
abstracts relay attacks and forgery attacks on authenticated,
value-bearing transactions, the setting of deployed contactless-payment relay
attacks~\citep{radu2022emv}. We build it in two steps: a single-session version
that is a one-shot decision, and then the multi-session version that is a 
strategic game, and which is the point of the section.

\paragraph{Our use-case protocol.} This is as follows. A prover $P$ authenticates to a verifier $V$ for a
transaction of value $R$. $V$ sends a fresh challenge $n$; $P$ replies with the
authenticator $a=\{n\}_{k}$ keyed by the long-term secret $k$ it shares with $V$;
$V$ accepts, authorising a payment of $R$, iff it receives $\{n\}_{k}$
(see Figure~\ref{fig:proto}).

\begin{figure}[t]
\centering
\begin{tikzpicture}[
  >={Stealth[length=2mm]},
  every node/.style={font=\small},
  msg/.style={->, thick},
]
% lifelines
\node (Vt) at (0,0) {$V$ (verifier)};
\node (Pt) at (5,0) {$P$ (prover), key $k$};
\draw[dashed] (Vt) -- ++(0,-2.4);
\draw[dashed] (Pt) -- ++(0,-2.4);
\coordinate (V0) at (0,-0.5);
\coordinate (P0) at (5,-0.5);
% messages
\draw[msg] (0,-0.9) -- node[above]{$n$ \scriptsize(fresh challenge)} (5,-0.9);
\draw[msg] (5,-1.6) -- node[above]{$a=\{n\}_{k}$} (0,-1.6);
\node[align=left] at (0.9,-2.15) {\scriptsize accept iff};
\node[align=left] at (1.15,-2.45) {\scriptsize $a=\{n\}_{k}$};
\end{tikzpicture}
\caption{The base authentication protocol. Acceptance authorises a payment of
value $R$ to whoever presents a valid $a$.}
\label{fig:proto}
\end{figure}

\subsection{Single session: a one-shot decision}

The intruder wants $V$ to accept, i.e.\ to make $V$ receive $a^\star=\{n\}_{k}$
without holding $k$. Two derivations are available, and the weighted DY system
prices both: \\
1. \emph{Relay}: intercept $V$'s challenge, forward it to a genuine
far-away $P$ that answers by protocol \emph{blind to the attack}
(Definition~\ref{def:dedgame}), intercept the reply, inject it to $V$, at
$\cost_{\mathrm{relay}} = 2\,w_{\mathrm{intercept}} + w_{\mathrm{inject}} +
w_{\mathrm{relay}}$. \\
2. \emph{Forge}: corrupt $P$ for $k$, then compute the
authenticator, at
$\cost_{\mathrm{forge}} = w_{\mathrm{corrupt}} + w_{\mathrm{enc}} +
w_{\mathrm{inject}}$; here the composing deduction costs of
Definition~\ref{def:dercost} give the forge cost as literally the cost of the
derivation tree for $\{n\}_k$ (acquire $k$, then one encryption). \\
The cheapest
attack is $\Delta_w(\know,a^\star) =
\min(\cost_{\mathrm{relay}},\cost_{\mathrm{forge}})$, and rational security holds
iff $R \le \Delta_w(\know,a^\star)$.

This is  says something a capability model cannot: a
DY-insecure protocol is \emph{rationally secure} whenever the transaction ceiling
$R$ is below the cheapest attack, which is the formal content of contactless
payment limits. But, as a \emph{single} decision it is not yet strategic: with the
whole protocol run visible, the intruder computes $\Delta_w$ over a bounded set of
derivation trees and compares to $R$. There is no sequencing, no uncertainty, and
the $\sim_\intruder$-machinery is idle, since one representative run stands for
all. The next step supplies exactly what is missing.

\subsection{Two indistinguishable sessions: a strategic game}

$V$ now runs two concurrent sessions whose opening traffic is identical from the
intruder's view, differing only in hidden state. The setup is as follows.
\begin{itemize}
\item \textbf{Session $A$} (the real target): uses key $k_A$, carries the
high-value transaction, reward $R$.
\item \textbf{Session $B$} (a decoy): uses key $k_B$, reward $0$.
\item The two authenticators $\{n\}_{k_A}$ and $\{n\}_{k_B}$ appear on the channel as
\emph{opaque} ciphertexts $c_A,c_B$: the intruder holds neither key, so it cannot
read them or tell which carries the real transaction.
\item The two global states therefore lie in one $\sim_\intruder$-class: they are
\emph{indistinguishable} to the intruder.
\end{itemize}

This is the ``right'' situation the CGS of Section~3. The intruder can
\emph{forward} either ciphertext, one enabled action available in both worlds,
but by the observation/effect split its consequence is resolved by the true
state, causing $V$ to accept only in the matching world (Figure~\ref{fig:attack}).

\begin{figure}[t]
\centering
\begin{tikzpicture}[
  >={Stealth[length=2mm]},
  every node/.style={font=\small},
  msg/.style={->, thick},
  ig/.style={font=\small\itshape},
]
\node (V) at (0,0) {$V$};
\node (I) at (3,0) {$\intruder$ (no keys)};
\node (P) at (6.2,0) {$P_A,\,P_B$};
\draw[dashed] (V) -- ++(0,-3.5);
\draw[dashed] (I) -- ++(0,-3.5);
\draw[dashed] (P) -- ++(0,-3.5);
% opaque ciphertexts appear
\draw[msg] (6.2,-0.7) -- node[above,font=\scriptsize]{$c_A=\{n\}_{k_A}$} (3,-0.7);
\draw[msg] (6.2,-1.35) -- node[above,font=\scriptsize]{$c_B=\{n\}_{k_B}$} (3,-1.35);
\node[ig, align=center] at (3.05,-1.9) {\scriptsize $c_A\sim_\intruder c_B$ \ (opaque)};
% round 1 commit: intercept
\node[align=center,font=\scriptsize] at (3,-2.5)
  {\textbf{R1}: commit-intercept ($c_A$ and/or $c_B$)};
% round 2 inject
\draw[msg] (3,-3.1) -- node[above,font=\scriptsize]{\textbf{R2}: inject matching $a$} (0,-3.1);
\node[align=center,font=\scriptsize] at (0.9,-3.45) {accept?};
\end{tikzpicture}
\caption{Two indistinguishable sessions. The intruder sees two opaque
ciphertexts it cannot tell apart; in round~1 it commits to intercepting one (or
both), and in round~2 injects the authenticator, which $V$ accepts only if it
matches the true session.}
\label{fig:attack}
\end{figure}

\paragraph{The attack is now two interdependent rounds.}
\begin{itemize}
\item \textbf{Round 1 (commit, under uncertainty).} The intruder acquires one or
both ciphertexts, each at cost $w_{\mathrm{intercept}}$. This choice is
\emph{sunk}: it is paid before the true session is known, and by uniformity it
must be \emph{the same} in worlds $A$ and $B$, since they are indistinguishable.
\item \textbf{Round 2 (complete).} The intruder injects the authenticator $V$
expects. It can derive this \emph{only} from the ciphertext of the true session:
with only $c_A$ in its knowledge, $\overline{\know_\intruder}$ contains the
session-$A$ authenticator but \emph{not} the session-$B$ one (which needs $k_B$).
A wrong round-1 commit makes the goal underivable, whatever the intruder spends
now.
\end{itemize}

\paragraph{The strategies, and their guaranteed value.} Write
$w_{\mathrm{i}}=w_{\mathrm{intercept}}$ and $w_{\mathrm{j}}=w_{\mathrm{inject}}$.
\begin{itemize}
\item \textbf{Guess $c_A$} (``intercept $c_A$, then inject''). Cost
$w_{\mathrm{i}}+w_{\mathrm{j}}$ on \emph{both} runs of the class. Reaches the goal
only on the $A$-run (reward $R$); on the $B$-run the acquired $c_A$ is useless
(reward $0$). Guaranteed (worst-case) utility
$\min(R,0)-(w_{\mathrm{i}}+w_{\mathrm{j}}) = -(w_{\mathrm{i}}+w_{\mathrm{j}})<0$.
\item \textbf{Guess $c_B$}: symmetric, and no better.
\item \textbf{Hedge} (``intercept \emph{both} $c_A,c_B$, then inject the one that
matches''). Acquiring both removes the uncertainty: whichever session is real, the
intruder holds its ciphertext and can complete the derivation, the second move
being a best response to the now-effectively-revealed world. Guarantees the goal
in either world at cost $2w_{\mathrm{i}}+w_{\mathrm{j}}$, for guaranteed utility
$R-(2w_{\mathrm{i}}+w_{\mathrm{j}})$.
\end{itemize}
The single guesses lose against the worst case; the hedge is the intruder's best
guaranteed play.

\paragraph{The threshold, and why imperfect information changes it.} Rational
security holds iff \emph{no} uniform strategy guarantees positive utility. Since
the hedge is the best the intruder has, the protocol is rationally secure iff
\[
\boxed{\,R \;\le\; 2\,w_{\mathrm{intercept}} + w_{\mathrm{inject}}\,}
\qquad\text{(the \emph{hedge cost}).}
\]
This is strictly higher than the single-session \emph{lucky-guess cost}
$w_{\mathrm{intercept}}+w_{\mathrm{inject}}$, and the gap is the entire content of
the imperfect-information analysis. 

A designer who models the intruder as
\emph{omniscient}, able to tell $A$ from $B$, prices the attack at the lucky-guess
cost and \emph{rejects} the protocol for $R$ in the intermediate range; the
correct rational-DY analysis, accounting for the intruder's uncertainty, prices it
at the strictly higher hedge cost and \emph{certifies} the same protocol as
rationally secure. The intruder's inability to distinguish the sessions is not a
modelling convenience but the source of a quantitatively higher security
guarantee.

\paragraph{Why all use-case ingredients are important.}
\begin{itemize}
\item \emph{Cumulative:} the round-1 intercept is sunk before round~2, and the two
moves are interdependent, the second constrained by the first.
\item \emph{DY-symbolic:} which authenticator the intruder can derive is governed
by the closure of what it committed to acquiring, so a wrong commit renders the
goal underivable.
\item \emph{Imperfect-information:} the uniform round-1 choice must span an
indistinguishable class on which reward varies while cost does not, which is what
turns the one-shot decision into a game with winning and losing strategies and
makes the \emph{hedge}, not the guess, the price of rational security.
\end{itemize}

\section{Use-case: Rational Coercion in ThreeBallot}
\label{sec:threeballot}

Where the use-case of Section~\ref{sec:relay} exercises the deduction layer and
the intruder's strategising under session uncertainty, ThreeBallot sits at the
opposite pole: it uses \emph{no cryptography}, so the DY deduction closure does
nothing,
and the entire security argument rests on indistinguishability and the cost of
communication. Presenting both shows the framework spans the two extremes, the
purely deductive and the purely informational, within one cost-and-incentive
account.

We instantiate the \emph{ThreeBallot}~\citep{rivest2006threeballot} in our framework. It was
modelled in multi-agent-systems literature
\citep{belardinelli2021bisimulations}, so comparing that with our approach is meaningful.

\paragraph{The ThreeBallot Protocol.}
Each voter receives a \emph{multi-ballot} made of three identical ribbons, each
listing all candidates and each bearing a distinct, meaningless serial ID at the
foot. To \emph{vote for} a candidate the voter marks that candidate's row on
\emph{exactly two} of the three ribbons; for \emph{every other} candidate the
voter marks the row on \emph{exactly one} ribbon. (Any other pattern is an
invalid ballot.) The voter separates the three ribbons and casts all three into a
ballot box that scrambles their order, destroying the link between ribbons of one
multi-ballot. After the poll, every cast ribbon is scanned and published on a
public bulletin board. The tally for a candidate is the total marks on that
candidate's row across all ribbons, minus the number of voters, since each voter
contributes exactly one ``baseline'' mark per candidate and one extra mark to
their chosen candidate.

Two features give ThreeBallot its coercion story. First, the voter keeps a
\emph{receipt}: a copy of one of her three ribbons, chosen by her. She can later
check that this exact ribbon appears on the bulletin board, giving individual
verifiability. Secondlt, a single ribbon, in isolation, is
consistent with many different votes: because the ``vote for'' pattern is two
marks out of three and ``not vote'' is one out of three, any one ribbon could
belong to a multi-ballot expressing almost any preference. Hence, the receipt does
\emph{not} prove how the voter voted. This is exactly what a rational coercer must
contend with.

\paragraph{Setting: A Rational Coercer.} A coercer $\intruder$ approaches a voter $V$ and demands a vote for candidate $x$,
offering a bribe $B$ and, to enforce it, demanding that $V$ hand over her receipt
ribbon as ``proof.'' Following the standard threat
model~\citep{belardinelli2021bisimulations}, we take $\intruder$ to be an agent
who also observes the public bulletin board; $V$ fully cooperates in appearance.
The coercer's actions carry costs: $\mathsf{approach}$ (cost $c_a$: issue the
demand and collect the receipt), $\mathsf{pay}(B)$ (cost $B$: hand over the
bribe), $\mathsf{verify}$ (cost $c_v$: cross-check the surrendered ribbon against
the board). The reward $R$ is the value to $\intruder$ of one genuine vote for
$x$.

The voter's private choice is whether to \emph{comply} (i.e., mark $x$ twice and hand
over a ribbon consistent with the demand) or to \emph{resist}: vote for her true
preference $y\ne x$, yet still produce a surrendered ribbon that is \emph{equally}
consistent with an $x$-vote. Because a lone ribbon under-determines the vote, such
a ribbon always exists; the coercion-resistance of ThreeBallot is precisely the
statement that the coercer's observation (i.e., the surrendered ribbon plus the public
board) is identical in the comply and resist cases. In our model this is the
indistinguishability $h_{\mathit{comply}}\sim_\intruder h_{\mathit{resist}}$ of
the two histories.

\paragraph{Modelling and Analysis.}

We model $\intruder$, $V$, and the board/environment as a rational DY CGS. The
knowledge component $\know_\intruder(s)$ records the surrendered ribbon and the
published board; by the combinatorial argument above, the comply- and
resist-histories fall in the same $\sim_\intruder$-class. The violating
proposition $\mathit{viol}$ holds when a genuine $x$-vote by $V$ is counted
\emph{and} $\intruder$ has paid (i.e., the coercer got what it paid for) carrying
reward $R$. We ask whether the coercer has a profitable attack:
\[
\mathcal{C},s_0 \models
\coalition{\intruder}^{<R}\,\lozenge\,(\mathit{viol}\wedge\reward{=}R).
\]

Because $h_{\mathit{comply}}\sim_\intruder h_{\mathit{resist}}$, any
\emph{uniform} coercer strategy (Definition~5) must prescribe the same action on
both: $\mathsf{verify}$ returns the same observation (i.e., a well-formed ribbon
consistent with an $x$-vote appears on the board) whether $V$ complied or
resisted. So no uniform strategy can condition payment on genuine compliance, and
$\intruder$ faces a decision under its own indistinguishability:
\begin{itemize}
\item \textbf{Pay.} Expected utility $R\cdot p - (c_a + B + c_v)$, where $p$ is
the probability $V$ actually complied. A rational $V$ with true preference
$y\ne x$ and any private benefit $\beta>0$ from voting sincerely resists whenever
$\beta>0$ (i.e., resisting is costless and, by coercion-resistance, undetectable), so
$p\to 0$ and the coercer's utility $\to -(c_a+B+c_v)<0$.
\item \textbf{Do not pay.} Utility $0$ (or $-c_a$ if the approach was already
made).
\end{itemize}
Hence, for a rational voter the coercer's best response is \emph{not to attack}:
the witness formula is unsatisfiable and ThreeBallot is \emph{rationally secure}.
The qualitative model cannot state this distinction: syntactically a coerced state
\emph{is} reachable (i.e., a complying voter exists) so plain $\mathsf{ATL}$ reports
$\coalition{\intruder}\lozenge\,\mathit{viol}$ as \emph{true}; yet no
\emph{rational} coercer brings it about. (This is  the qualitative
coercion-resistance property verified for ThreeBallot
in~\citep{belardinelli2021bisimulations}, now refined into an incentive
statement.)

\paragraph{The Threshold.}

The separation becomes quantitative once we relax perfect coercion-resistance: that is,
suppose that the receipt mechanism leaks compliance with small probability $\epsilon$:
e.g.\ a poorly randomised ribbon-selection or a board-correlation attack lets
$\mathsf{verify}$ distinguish comply from resist with advantage $\epsilon$. Then
$\intruder$ can condition payment on the leak, and its expected utility from
attacking is
\[
\util_\intruder \;=\; \epsilon\,R \;-\; (c_a + \epsilon B + c_v).
\]
Attacking is rational iff $\util_\intruder>0$, i.e.
\[
\boxed{\;\epsilon \;>\; \frac{c_a + c_v}{R - B}\;}
\]
(for $R>B$). Below this leakage threshold no rational coercer attacks; above it,
coercion pays off. Classical coercion-resistance is the single point $\epsilon=0$; our
analysis shows that what actually protects voters is not the binary property but
the \emph{margin} between the leakage the protocol permits and the coercer's
cost-to-net-benefit ratio. 
% This is the guarantee a deployment engineer needs and
% the qualitative theory, including the qualitative ThreeBallot analysis, cannot
% provide.

\section{Conclusions \& Future Work}
 We have shown that the DY attacker can be made \emph{rational} inside a strategic
 logic, and that the resulting notion of rational
 security is a computable, strictly finer guarantee than DY security. 
Several directions
follow. For instance, our decidability rests
on bounded message depth; lifting this to unbounded spaces is desirable. Or, the
threshold analysis above is naturally probabilistic, suggesting a stochastic
 \watl{} in which the coercer maximises expected payoff. 
%
%\emph{Coalitions of
% rational insiders}: the ATL coalition modality $\coalition{A}$ extends directly
% to a corrupt-insider-plus-intruder coalition, where the cost of corruption enters
% the utility: the natural next case, and the point at which the strategic-logic
% formulation earns its keep over a single-attacker game.


\begin{thebibliography}{11}
% \providecommand{\natexlab}[1]{#1}
% \providecommand{\url}[1]{\texttt{#1}}
% \expandafter\ifx\csname urlstyle\endcsname\relax
%   \providecommand{\doi}[1]{doi: #1}\else
%   \providecommand{\doi}{doi: \begingroup \urlstyle{rm}\Url}\fi

\bibitem[Alur et~al.(2002)Alur, Henzinger, and Kupferman]{alur2002alternating}
Rajeev Alur, Thomas~A Henzinger, and Orna Kupferman.
\newblock Alternating-time temporal logic.
\newblock \emph{Journal of the ACM}, 49\penalty0 (5):\penalty0 672--713, 2002.

\bibitem[Ballot et~al.(2024)Ballot, Malvone, Leneutre, and
  Laarouchi]{ballot2024capacity}
Gabriel Ballot, Vadim Malvone, Jean Leneutre, and Youssef Laarouchi.
\newblock Strategic reasoning under capacity-constrained agents.
\newblock In \emph{Proc. 23rd Int. Conf. on Autonomous Agents and Multiagent
  Systems (AAMAS)}, pages 123--131, 2024.

\bibitem[Ballot(2025)]{ballot2025thesis}
Gabriel Ballot.
\newblock \emph{Strategic Reasoning for Cyber-Security}.
\newblock PhD thesis, Institut Polytechnique de Paris, 2025.

\bibitem[Belardinelli et~al.(2025)Belardinelli, Boureanu, Dima, and
  Malvone]{belardinelli2025infosharing}
Francesco Belardinelli, Ioana Boureanu, Catalin Dima, and Vadim Malvone.
\newblock Model-checking strategic abilities in information-sharing systems.
\newblock \emph{ACM Transactions on Computational Logic}, 26\penalty0
  (1):\penalty0 1--45, 2025.

\bibitem[Belardinelli et~al.(2022)Belardinelli, Boureanu, Dima, and
  Malvone]{belardinelli2022acast}
Francesco Belardinelli, Ioana Boureanu, Catalin Dima, and Vadim Malvone.
\newblock Model checking strategic abilities in information-sharing systems.
\newblock \emph{arXiv preprint arXiv:2204.08896}, 2022.

\bibitem[Belardinelli et~al.(2021)Belardinelli, Condurache, Dima, Jamroga, and
  Knapik]{belardinelli2021bisimulations}
Francesco Belardinelli, Rodica Condurache, Catalin Dima, Wojciech Jamroga, and
  Michal Knapik.
\newblock Bisimulations for verifying strategic abilities with an application to
  the {ThreeBallot} voting protocol.
\newblock \emph{Information and Computation}, 276:\penalty0 104552, 2021.

\bibitem[Bella and Bistarelli(2004)]{bella2001rational}
Giampaolo Bella and Stefano Bistarelli.
\newblock Soft constraint programming to analysing security protocols.
\newblock In \emph{Theory and Practice of Logic Programming}. 2004.

\bibitem[Brihaye et~al.(2015)Brihaye, Geeraerts, Haddad, Monmege, Sassolas, and
  Wiedemann]{brihaye2015pareto}
Thomas Brihaye, Gilles Geeraerts, Axel Haddad, Benjamin Monmege, Mathieu
  Sassolas, and Bertrand Wiedemann.
\newblock Quantitative reachability games: Optimal strategies and value
  computation.
\newblock In \emph{Reachability Problems}, Lecture Notes in Computer Science.
  Springer, 2015.

\bibitem[Clarkson et~al.(2008)Clarkson, Chong, and Myers]{clarkson2008civitas}
Michael~R Clarkson, Stephen Chong, and Andrew~C Myers.
\newblock Civitas: Toward a secure voting system.
\newblock In \emph{IEEE S\&P}, pages 354--368, 2008.

\bibitem[Delaune et~al.(2009)Delaune, Kremer, and Ryan]{delaune2010verifying}
St{\'e}phanie Delaune, Steve Kremer, and Mark Ryan.
\newblock Verifying privacy-type properties of electronic voting protocols.
\newblock 2009.

\bibitem[Dima and Tiplea(2011)]{dima2011undecidable}
Catalin Dima and Ferucio~Laurentiu Tiplea.
\newblock Model-checking {ATL} under imperfect information and perfect recall
  semantics is undecidable.
\newblock \emph{arXiv preprint arXiv:1102.4225}, 2011.

\bibitem[Dolev and Yao(1983)]{dolev1983security}
Danny Dolev and Andrew Yao.
\newblock On the security of public key protocols.
\newblock \emph{IEEE Transactions on Information Theory}, 29\penalty0
  (2):\penalty0 198--208, 1983.

\bibitem[Gallo et~al.(1993)Gallo, Longo, Pallottino, and
  Nguyen]{gallo1993hypergraphs}
Giorgio Gallo, Giustino Longo, Stefano Pallottino, and Sang Nguyen.
\newblock Directed hypergraphs and applications.
\newblock \emph{Discrete Applied Mathematics}, 42\penalty0 (2--3):\penalty0
  177--201, 1993.

\bibitem[Garay et~al.(2013)Garay, Katz, Maurer, Tackmann, and
  Zikas]{garay2013rational}
Juan Garay, Jonathan Katz, Ueli Maurer, Bj{\"o}rn Tackmann, and Vassilis Zikas.
\newblock Rational protocol design: Cryptography against incentive-driven
  adversaries.
\newblock In \emph{FOCS}, pages 648--657, 2013.

\bibitem[Gutierrez et~al.(2017)Gutierrez, Harrenstein, and
  Wooldridge]{gutierrez2017rational}
Julian Gutierrez, Paul Harrenstein, and Michael Wooldridge.
\newblock From model checking to equilibrium checking: Reactive modules for
  rational verification.
\newblock \emph{Artificial Intelligence}, 248:\penalty0 123--157, 2017.

\bibitem[Halpern and Teague(2004)]{halpern2004rational}
Joseph Halpern and Vanessa Teague.
\newblock Rational secret sharing and multiparty computation.
\newblock In \emph{STOC}, pages 623--632, 2004.

\bibitem[Jamroga and Dix(2006)]{jamroga2006complete}
Wojciech Jamroga and J{\"u}rgen Dix.
\newblock Model checking abilities under incomplete information is indeed
  $\Delta_2^{\mathrm{P}}$-complete.
\newblock In \emph{Proc. EUMAS'06}, 2006.

\bibitem[Radu et~al.(2022)Radu, Chothia, Newton, Boureanu, and
  Chen]{radu2022emv}
Andreea-Ina Radu, Tom Chothia, Christopher J.P. Newton, Ioana Boureanu, and
  Liqun Chen.
\newblock Practical {EMV} relay protection.
\newblock In \emph{IEEE Symposium on Security and Privacy (S\&P)}, pages
  1737--1756, 2022.

\bibitem[Rivest(2006)]{rivest2006threeballot}
Ronald~L. Rivest.
\newblock The {ThreeBallot} voting system.
\newblock Unpublished manuscript, MIT, 2006.

\bibitem[Suresh(2003)]{Suresh03}
S.~P. Suresh.
\newblock {\em Foundations of Security Protocol Analysis}.
\newblock PhD thesis, Institute of Mathematical Sciences, Chennai, India, 2003.

\bibitem[Juels et~al.(2005)Juels, Catalano, and Jakobsson]{juels2005coercion}
Ari Juels, Dario Catalano, and Markus Jakobsson.
\newblock Coercion-resistant electronic elections.
\newblock In \emph{WPES}, pages 61--70, 2005.

\bibitem[Khachiyan et~al.(2008)Khachiyan, Boros, Borys, Elbassioni, Gurvich, and
  Makino]{khachiyan2008shortestpath}
Leonid Khachiyan, Endre Boros, Konrad Borys, Khaled Elbassioni, Vladimir
  Gurvich, and Kazuhisa Makino.
\newblock On short paths interdiction problems: Total and node-wise limited
  interdiction.
\newblock \emph{Theory of Computing Systems}, 43\penalty0 (2):\penalty0
  204--233, 2008.

\bibitem[Kremer and Raskin(2002)]{kremer2003game}
Steve Kremer and Jean-Fran{\c{c}}ois Raskin.
\newblock Game analysis of abuse-free contract signing.
\newblock In \emph{CSFW}, 2002.

\bibitem[Zeeman(1980)]{gt}
E.~C. Zeeman.
\newblock Population dynamics from game theory.
\newblock In {\em Global Theory of Dynamical Systems}, volume 819 of {\em Lecture Notes in Mathematics}, pages 471--497. Springer, 1980.

\bibitem[Ramanujam and Suresh(2003)]{RamSu03}
R.~Ramanujam and S.~Suresh.
\newblock A decidable subclass of unbounded security protocols.
\newblock In {\em Proc. of the IFIP WG 1.7 and ACM SIGPLAN Workshop on Issues in the Theory of Security (WITS'03)}, pages 11--20. IOS Press, 2003.


\end{thebibliography}
\end{document}